\newcommand{\te}{\mathrm{t}}
\newcommand{\bk}{\mathbf{k}}
\newcommand{\eps}{\varepsilon}
\newcommand{\R}{\mathbb{R}}
\newcommand{\sv}{\, ,}
\newcommand{\eq}[1]{(\ref{#1})}
\newcommand*{\diff}{\mathop{}\!\mathrm{d}}
\newtheorem{definition}{Definition}
\newtheorem{lemma}{Lemma}
\newtheorem{theorem}{Theorem}
\newtheorem{remark}{Remark}
\newtheorem{corollary}{Corollary}
\begin{document}
\title{Mathematical aspects and simulation of electron-electron scattering in graphene}
\date{}
\author{ Giovanni Nastasi\thanks{Department of Mathematics and Computer Science, University of Catania, Viale Andrea Doria 6, 
95125 Catania, Italy  ({\tt giovanni.nastasi@unict.it}).} 
\and Vittorio Romano\thanks{Department of Mathematics and Computer Science, University of Catania, Viale Andrea Doria 6, 
95125 Catania, Italy  ({\tt romano@dmi.unict.it}).} }
\maketitle

\begin{abstract}
Some properties of the electron-electron collision operator in graphene are analyzed along with the evaluation of collision rate. Monte Carlo simulations complete the study and highlight  the non negligible role of the electron-electron scattering for an accurate evaluation of the currents and, as a consequence,  of the characteristic curves.  
\end{abstract}

{\bf Keywords} {Graphene; charge transport; electron-electron scattering; Direct Simulation Monte Carlo}

\section{Introduction}
The discovery that free-standing graphene indeed there exists \cite{ART:Nov1,ART:Nov2} behind the theoretical prediction has paved the way for a huge literature on 2D crystals.  Graphene has interesting electric features \cite{ART:GeNo,ART:CaNe} but it is also a challenging material from an electro-magneto-mechanical \cite{ART:Sfyris} and thermal point of view \cite{ART:CaTo}. In this paper
some mathematical and computational issues  arising when the electron-electron scattering is included in the analysis of charge transport in graphene are investigated. The electron-phonon interactions are more effective but electron-electron scattering cannot be neglected in some regimes.  The appropriate expression for the electron-electron scattering rate is still matter of debate (see for example \cite{ART:CastroNeto,ART:Tomadin,ART:Li,ART:Fang,ART:Sano,ART:Borowik,ART:Hwang,ART:DasSarma,ART:Brida}); however, some mathematical considerations can be made only on the basis of the general properties, such conservation of energy and crystal momentum during the collision when umklapp effects are neglected. These can be relevant but usually at low temperatures. Here the analysis is performed having in mind room temperature.

We have found the general form of the collisional invariants and the kernel of the 
 electron-electron scattering. The latter is constituted by a family of distribution functions which generalize the Fermi-Dirac one and belong to a five-dimensional manifold. If both electron-phonon and electron-electron scatterings are taken into account, the equilibria lie on a two-dimensional sub-manifold.   It is interesting to stress that the results are analogous to that obtained in the relativistic case \cite{Cercignani}. This is not surprising since electrons inside graphene move like massless fermions for wave-vectors around the Dirac points \cite{Katnelson}. However, our derivation requires less regularity than that assumed in \cite{Cercignani}  and does not use concepts of Special Relativity. 
 
 To complete the investigation, an estimation of the electron-electron scattering rate is given and employed in a DSMC code based on the approach proposed in \cite{RoMajCo} 
 which allows an efficient and accurate inclusion of the Pauli exclusion principle. Simulations of charge transport in suspended monolayer graphene indicate a variation in the 
 average electron velocity of about 13\%, at least in the considered cases, giving a confirmation of the importance of the electron-electron scattering according to \cite{ART:Li}. 
 
 The plan of the paper is as follows. In Sec. 2 the semiclassical Boltzmann equation is introduced and in particular the main features of the electron-electron scattering are recalled.  Sec. 3 is devoted to the analysis of the  properties of the electron-electron collision operator in graphene such as the collision invariants and the kernel. In Sec.s 4 and 5 the electron-electron scattering rate is evaluated  in the intra-band case and in the inter-band case, respectively. The last section contains the results of the Monte Carlo simulations in order to assess the importance of the electron-electron scattering with respect to the electron-phonon ones.

\section{Semiclassical transport and electron-electron scattering}
Electrons moving inside a graphene sheet are located in the wave-vector space mainly around the Dirac points (here considered as equivalent). With a good approximation, the energy bands can be assumed linear in the modulus of the wave-vector   
$\mathbf{k}$, measured with respect to the Dirac point $\mathbf{k}_l$,  with a zero gap between the valence and the conduction bands. Therefore, we assume that the energy band $\varepsilon$ with respect to the Dirac energy $\varepsilon_D$ is given by  
\begin{equation*}
\varepsilon(\mathbf{k}) = \alpha\hbar v_F \vert \mathbf{k}  -  \mathbf{k}_l\vert,
\end{equation*}
with ${\alpha} = 1$ in the conduction band and ${\alpha} = -1$ in the valence band, 
and extend such a dispersion relation to  any $\mathbf{k} \in \R^2$.

The electron velocity is given by  
\begin{equation*}
\mathbf{v}_{\alpha} = \frac{1}{\hbar}\nabla_{\mathbf{k}}\varepsilon_{\alpha}.
\end{equation*}
Observe that the energy band is negative in the valence band. 
 
The semiclassical Boltzmann equations for the distributions (occupation numbers) $f_\alpha(\mathbf{r},\mathbf{k},t)$, at the position $\mathbf{r}$, wave-vector $\mathbf{k}$ and time $t$,  of electrons in the conduction band ($\alpha = 1$) and in the valence band  ($\alpha = -1$) read (see \cite{ART:Tomadin})
\begin{align}
&\frac{\partial f_\alpha(\mathbf{k})}{\partial t}+\mathbf{v}_{\alpha}\cdot\nabla_{\mathbf{r}}f_\alpha(\mathbf{k})-\frac{e}{\hbar}\mathbf{E}\cdot\nabla_{\mathbf{k}}f_\alpha(\mathbf{k})\nonumber\\
&= C[f_\alpha] := \sum_{\nu,\alpha',\mathbf{k}'}\left[ S_{\alpha',\alpha}^{\nu}(\mathbf{k}',\mathbf{k}) f_{\alpha'}(\mathbf{k}') (1-f_{\alpha}(\mathbf{k})) - S_{\alpha,\alpha'}^{\nu}(\mathbf{k},\mathbf{k'}) f_{\alpha}(\mathbf{k}) (1-f_{\alpha'}(\mathbf{k}')) \right]\nonumber\\
&+\sum_{\alpha',\mathbf{k}_*, \mathbf{k}', \mathbf{k}'_*} \sum_{\beta \in \{\alpha', - \alpha' \}}\left[ S_{ee} (\mathbf{k}',\mathbf{k}'_*,\mathbf{k},\mathbf{k}_*) f_{\alpha}(\mathbf{k}')f_{\alpha'}(\mathbf{k}'_*)  (1-f_\alpha(\mathbf{k}))(1-f_{\beta}(\mathbf{k}_*))\right. \nonumber\\
&\left. - S_{ee}(\mathbf{k},\mathbf{k}_*,\mathbf{k}',\mathbf{k}'_*) f_\alpha(\mathbf{k})f_{\alpha'}(\mathbf{k}_*) (1-f_{\alpha}(\mathbf{k}'))(1-f_{\beta}(\mathbf{k}'_*)) \right] \delta_{\mathbf{k}_1 + \mathbf{k}_2 = \mathbf{k}'_1 + \mathbf{k}'_2}, \label{transport}
\end{align}
where the Kronecker symbol $\delta_{\mathbf{k}_1 + \mathbf{k}_2 = \mathbf{k}'_1 + \mathbf{k}'_2}
$ expresses the momentum conservation.
We have explicitly written only the dependence of  $f_\alpha$ on $\mathbf{k}$ for the sake of simplicity. 
Eq.s (\ref{transport}) include both the electron-phonon and the electron-electron scatterings, even if for low densities the second one is usually considered negligible.

The $S_{\alpha',\alpha}^{\nu}$'s are the transition rate due to the scattering of electrons with phonons in the $\nu$ branch, where $\nu$ ranges in the set LA (longitudinal acoustic), 
TA (transversal acoustic), LO (longitudinal optical), TO (transversal optical) and $K$ phonons. Indeed there exist also the flexural modes, the so-called $Z$-phonons, but they do not interact 
with electrons and play a relevant role only for the determination of the crystal temperature \cite{CoRo}. $S_{ee}$ is the electron-electron transition rate.  Umklapp effects will be neglected.

The transition rate due to the scattering of electrons with phonons in the $\nu$th branch has the form
\begin{eqnarray}
&&
S_{\alpha', \alpha}(\bk', \bk) =
  \left| G^{(\nu)}_{\alpha', \alpha}(\bk', \bk) \right|^{2} \left[ 
\left( n^{(\nu)}_{\mathbf{q}} + 1 \right)
\delta \left( \eps_{\alpha}(\bk) - \eps_{\alpha'}(\bk') + \hbar \, 
\omega^{(\nu)}_{\mathbf{q}}
\right) \right. \nonumber
\\
&&
\left. \mbox{} \hspace{137pt} + n^{(\nu)}_{\mathbf{q}} \,
\delta \left( \eps_{\alpha}(\bk) - \eps_{\alpha'}(\bk') - \hbar \, 
\omega^{(\nu)}_{\mathbf{q}}
\right) \right] .  \label{Scatt}
\end{eqnarray}
 $\left| G^{(\nu)}_{\alpha', \alpha}(\bk', \bk) \right|$ is the 
matrix element,  the symbol $\delta$ denotes the Dirac distribution, $\omega^{(\nu)}_{\mathbf{q}}$ is the
the $\nu$th  phonon  frequency, $n^{(\nu)}_{\mathbf{q}}$ is the Bose-Einstein distribution for 
the phonon of type $\nu$
$$
n^{(\nu)}_{\mathbf{q}} = \dfrac{1}{e^{\hbar \, \omega^{(\nu)}_{\mathbf{q}} /k_B T} - 1} \sv
$$
$k_B$ is the Boltzmann constant and $T$ is the  graphene lattice temperature which, in this article, will be kept constant.
When, for a phonon $\nu_{*}$, $\hbar \, \omega^{(\nu_{*})}_{\mathbf{q}} \ll k_B T$, then the 
scattering with the phonon $\nu_{*}$ can be assumed elastic. 
In this case, we eliminate in Eq.~\eq{Scatt} the term 
$\hbar \, \omega^{(\nu_{*})}_{\mathbf{q}}$ inside the delta distribution and we use the 
approximation $n^{(\nu_{*})}_{\mathbf{q}} + 1 \approx n^{(\nu_{*})}_{\mathbf{q}}$. \\
For acoustic phonons, usually one considers the elastic approximation, and
\begin{equation}
2 \, n^{(ac)}_{\mathbf{q}}
\left| G^{(ac)}(\bk', \bk) \right|^{2} = 
\dfrac{\pi \, D_{ac}^{2} \, k_{B} \, T}{4 \hbar \, \sigma_m \, v_{p}^{2}}
\left( 1 + \cos \vartheta_{\bk \sv \bk'} \right) ,
\label{transport_acoustic}
\end{equation}
where $D_{ac}$ is the acoustic phonon coupling constant, $v_{p}$ is the sound speed in 
graphene, 
$\sigma_m$ the graphene areal density, 
and $\vartheta_{\bk \sv \bk'}$ is the convex angle between $\bk$ and ${\bk'}$.   
\\
There are three relevant optical phonon scatterings: the longitudinal optical (LO), the 
transversal optical (TO) and the ${K}$  phonons.
The matrix elements are
\begin{align}
&
\left| G^{(LO)}(\bk', \bk) \right|^{2} + \left| G^{(TO)}(\bk', \bk) \right|^{2} =
 \dfrac{\pi \, D_{O}^{2}}{\sigma_m \, \omega_{O}}
\\
&
\left| G^{(K)}(\bk', \bk) \right|^{2} = 
\dfrac{ \pi \, D_{K}^{2}}{\sigma_m \, \omega_{K}}
\left( 1 - \cos \vartheta_{\bk \sv \bk'} \right) ,
\end{align}
where $D_{O}$ is the optical phonon coupling constant, $\omega_{O}$ the optical phonon 
frequency, $D_{K}$ is the K-phonon coupling constant and $\omega_{K}$ the K-phonon frequency.
Physical parameters for the collision terms are summarized in Table \ref{tabella}.
\begin{table}[!ht]
\centering
\begin{tabular}{||l|c||l|c||}
\hline & & & \\[-10pt]
$v_{F}$ & $10^{8}$ cm/s & $v_{p}$ & $2 \times 10^{6}$ cm/s \\[2pt]
\hline & & & \\[-10pt]
$\sigma_{m}$ & $7.6 \times 10^{-8}$ g/cm$^{2}$ & $D_{ac}$ & $6.8$ eV \\[2pt]
\hline & & & \\[-10pt]
$\hbar \, \omega_{O}$ & $164.6$ meV & $D_{O}$ & $10^{9}$ eV/cm \\[2pt]
\hline & & & \\[-10pt]
$\hbar \, \omega_{K}$ & $124$ meV & $D_{K}$ & $3.5 \times 10^{8}$ eV/cm \\[2pt]
\hline
\end{tabular}
\caption{Physical parameters for the collision terms.}
\label{tabella}
\end{table}

$S_{ee}(\mathbf{k}_1,\mathbf{k}_2,\mathbf{k}'_1,\mathbf{k}'_2)$ represents the probability of transition, for unit time,  from the joint state  
$\left(\mathbf{k}_1, \mathbf{k}_2 \right)$ to the joint state $\left(\mathbf{k}'_1,\mathbf{k}'_2\right)$. In this paper only normal scatterings are considered and umklapp processes  are neglected. 

According to the Fermi golden rule \cite{ART:Li} (for similar approaches in other contexts see \cite{Morandi1,Morandi2}),  if  the ingoing electrons are in the bands $\alpha$, $\alpha'$ and the outgoing electrons  are in the bands $\beta$, $\beta'$, $S_{ee}$ is given by
\begin{equation*}
S_{ee}(\mathbf{k}_1,\mathbf{k}_2,\mathbf{k}'_1,\mathbf{k}'_2) = \frac{2\pi}{\hbar}\vert M \vert^2\delta(\varepsilon_{\beta}(\mathbf{k}'_1) +\varepsilon_{\beta'}(\mathbf{k}'_2) - \varepsilon_{\alpha}(\mathbf{k}_1) -\varepsilon_{\alpha'}(\mathbf{k}_2)).
\end{equation*}
$M$ is the matrix interaction, whose generic element is given by 
\begin{equation*}
\vert M\vert^2 = \frac{1}{2}\left[ \vert V(q)\vert^2+\vert V(q')\vert^2- V(q)V(q') \right],
\end{equation*}
where $V(q)$ is the  Coulomb potential between pairs of electrons having momenta $\mathbf{k}_1$ and $\mathbf{k}'_1$
\begin{align*}
V(q) & = \frac{2\pi e^2}{\epsilon(q)qA}\frac{1+\cos(\phi_{\mathbf{k}_1,\mathbf{k}'_1})}{2}\frac{1+\cos(\phi_{\mathbf{k}_2,\mathbf{k}'_2})}{2},
\end{align*}
with $q=\vert \mathbf{k}_1-\mathbf{k}'_1 \vert$. Similarly
\begin{align*}
V(q') & = \frac{2\pi e^2}{\epsilon(q')q'A}\frac{1+\cos(\phi_{\mathbf{k}_1,\mathbf{k}'_2})}{2}\frac{1+\cos(\phi_{\mathbf{k}_2,\mathbf{k}'_1})}{2},
\end{align*}
with $q'=\vert \mathbf{k}_1-\mathbf{k}'_2 \vert$. 

In the above expressions,  $\phi_{\mathbf{k},\mathbf{k}'}$ denotes the angle between   $\mathbf{k}$ and $\mathbf{k}'$. 
More sophisticated models are present in the literature. The interested reader is referred to \cite{ART:Fang,ART:Sano,ART:Borowik}.

In the \emph{random-phase approximation} (valid when $n\geq 10^{12}$ cm$^{-2}$), the dielectric function $\epsilon(q)$ is given by (see \cite{ART:Hwang})
\begin{equation*}
\epsilon(q) = 1+v_c(q)\Pi(q),
\end{equation*}
where $v_c(q)=2\pi e^2/\kappa q$,  $\kappa$ being the \textit{background lattice dielectric constant} which satisfies
\begin{equation}\label{EQ:rs_eqn}
r_s = \frac{e^2}{\kappa \gamma} \sqrt{\frac{4}{g_s g_v}}.
\end{equation}
Here, $\gamma=\hbar v_F$, $g_s$ and $g_v$ are the spin and valley degenerations respectively;  $r_s$ is a constant representing the adimensional \textit{Wigner-Seitz radius}; $\Pi(q)=D(\varepsilon_F)\widetilde{\Pi}(q)$, with $\varepsilon_F$ the Fermi energy and $D$  is the density of states given by
\begin{equation}\label{EQ:DoS}
D(\varepsilon) = \frac{g_v g_s \vert \varepsilon \vert}{2\pi\gamma^2}.
\end{equation}
The function $\widetilde{\Pi}(q)$ is usually decomposed as  $\widetilde{\Pi}(q)=\widetilde{\Pi}^-(q)+\widetilde{\Pi}^+(q)$ with 
\begin{eqnarray*}
\widetilde{\Pi}^-(q) &=& \frac{\pi q}{8 k_F},\\
\widetilde{\Pi}^+(q) &=& \left\lbrace
\begin{alignedat}{2}
&1-\frac{\pi q}{8 k_F} && \qquad\mbox{if}\quad q<2k_F\\
&1-\frac{\sqrt{q^2-4k_F^2}}{2q}-\frac{q}{4k_F}\arcsin\left( \frac{2k_F}{q} \right) && \qquad\mbox{otherwise}
\end{alignedat}
\right. 
\end{eqnarray*}
where $k_F=\sqrt{4\pi n / g_s g_v}$ is the Fermi momentum which satisfies $\varepsilon_F=\gamma k_F$ and  $n$ is the electron density. More details can be found in  \cite{ART:CastroNeto,ART:DasSarma}.

\section{Properties of the intra-band electron-electron collision operator in  graphene}

In any intra-band binary collision the conservation laws of crystal momentum and energy have to be satisfied in a normal scattering
\cite{ART:Wallace}
\begin{equation}
\hbar\mathbf{k} +\hbar\mathbf{k}_*=\hbar\mathbf{k}'+\hbar\mathbf{k}'_*, \label{conservation1} \end{equation}
\begin{equation}
\hbar v_F\vert\mathbf{k} - \mathbf{k}_l\vert+\hbar v_F\vert\mathbf{k}_* - \mathbf{k}_l\vert=\hbar v_F\vert\mathbf{k}' - \mathbf{k}_l\vert+\hbar v_F\vert\mathbf{k}'_* - \mathbf{k}_l\vert  \label{conservation2}
\end{equation}
 $\mathbf{k}$ and $\mathbf{k}_*$ are the crystal momenta of the incoming particles, $\mathbf{k}'$ and 
$\mathbf{k}'_*$ are the crystal momenta of the outgoing particles. Indeed, in general the crystal momentum is not the momentum of the electron but it is customary in solid state physics to assume the approximation of the conservation of the crystal momentum.

After the translation
$\mathbf{k}-\mathbf{k}_l  \rightarrow \mathbf{k}$
the previous conservation laws read (note that the conservation of momentum is invariant under translation) 
\begin{equation}\label{EQ:cons_mom_2}
\mathbf{k}+\mathbf{k}_*=\mathbf{k}'+\mathbf{k}'_*,
\end{equation}
\begin{equation}\label{EQ:cons_en_2}
\vert\mathbf{k}\vert+\vert\mathbf{k}_*\vert=\vert\mathbf{k}'\vert+\vert\mathbf{k}'_*\vert.
\end{equation}
\begin{definition}
A function $\varphi$ is said a  collisional invariant if it satisfies the condition
\begin{equation}\label{EQ:inv_coll}
\varphi+\varphi_*=\varphi'+\varphi'_*,
\end{equation}
where the following shorthand notation has been used: $\varphi := \varphi(\mathbf{k})$, $\varphi_* := \varphi(\mathbf{k}_*)$, $\varphi' := \varphi(\mathbf{k}')$, $\varphi' _*:= \varphi(\mathbf{k}'_*)$.
\end{definition}
We want to determine the most general form of the collision invariant for the electron-electron scattering (hereafter EES). To this aim we introduce the following two lemmas. 
\begin{lemma}\label{LEMMA:1} Let $ E_n$ be a n-dimensional Euclidean real vector space, endowed with the canonical scalar product 
 and let  $f(\mathbf{x})$ be a real function on $ E_n$  which is continuous  at least in a point and satisfies 
\begin{equation}
f(\mathbf{x})+f(\mathbf{x}_1) = f(\mathbf{x}+\mathbf{x}_1)
\end{equation}
for each $\mathbf{x},\mathbf{x}_1\in E_n$. Then $f(\mathbf{x}) = \mathbf{A}\cdot\mathbf{x}$ where  $\mathbf{A}$ is a constant vector. 
\end{lemma}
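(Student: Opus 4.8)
The statement is the additive Cauchy functional equation on $E_n$, and the plan is to bootstrap from the purely algebraic additivity to full $\mathbb{R}$-linearity using the single point of continuity, then invoke the finite-dimensional Riesz representation to produce $\mathbf{A}$. First I would record the consequences of additivity that hold with no regularity whatsoever: setting $\mathbf{x}=\mathbf{x}_1=\mathbf{0}$ gives $f(\mathbf{0})=0$; choosing $\mathbf{x}_1=-\mathbf{x}$ then yields $f(-\mathbf{x})=-f(\mathbf{x})$; and an easy induction extends this to $f(m\mathbf{x})=m\,f(\mathbf{x})$ for every integer $m$, whence $f(q\mathbf{x})=q\,f(\mathbf{x})$ for every rational $q$ (write $q=m/p$ and apply the integer identity twice). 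At this stage $f$ is already $\mathbb{Q}$-homogeneous and additive.

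The crucial step — and the place where the continuity hypothesis does its work — is to promote continuity at a single point to continuity everywhere. Suppose $f$ is continuous at some $\mathbf{x}_0$. Additivity gives $f(\mathbf{x}_0+\mathbf{h})-f(\mathbf{x}_0)=f(\mathbf{h})$ for all $\mathbf{h}$, so letting $\mathbf{h}\to\mathbf{0}$ shows $f(\mathbf{h})\to 0=f(\mathbf{0})$; thus $f$ is continuous at the origin. Translating by the same device, $f(\mathbf{y}+\mathbf{h})-f(\mathbf{y})=f(\mathbf{h})\to 0$ for any fixed $\mathbf{y}$, so $f$ is continuous on all of $E_n$. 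With continuity in hand I would upgrade $\mathbb{Q}$-homogeneity to $\mathbb{R}$-homogeneity: for real $\lambda$ pick rationals $q_j\to\lambda$ and pass to the limit in $f(q_j\mathbf{x})=q_j\,f(\mathbf{x})$, obtaining $f(\lambda\mathbf{x})=\lambda\,f(\mathbf{x})$.

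Together, additivity and $\mathbb{R}$-homogeneity make $f$ a genuine linear functional. To exhibit the vector $\mathbf{A}$ I would expand $\mathbf{x}=\sum_{i=1}^{n} x_i\,\mathbf{e}_i$ in the canonical basis and use additivity together with homogeneity to write $f(\mathbf{x})=\sum_{i=1}^{n} x_i\,f(\mathbf{e}_i)$; setting $A_i:=f(\mathbf{e}_i)$ gives $f(\mathbf{x})=\mathbf{A}\cdot\mathbf{x}$ with $\mathbf{A}$ constant, as claimed.

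I expect the only genuine obstacle to be conceptual rather than computational: the passage from $\mathbb{Q}$-homogeneity to $\mathbb{R}$-homogeneity is simply false without some regularity, since Hamel-basis constructions furnish additive but non-linear (indeed non-measurable) solutions. The hypothesis of continuity at even one point is exactly the minimal ingredient that excludes these pathologies, and the heart of the proof is the short translation argument above showing that one point of continuity forces continuity everywhere.
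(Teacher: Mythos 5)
Your proposal is correct and is precisely the standard argument for the additive Cauchy equation that the paper itself omits, deferring instead to the reference \cite{Cercignani_old}: additivity gives $\mathbb{Q}$-homogeneity, the translation identity $f(\mathbf{y}+\mathbf{h})-f(\mathbf{y})=f(\mathbf{h})$ upgrades continuity at one point to continuity everywhere, density of the rationals then yields $\mathbb{R}$-linearity, and expanding in the canonical basis produces $\mathbf{A}$. Your closing remark correctly identifies why the one-point continuity hypothesis is essential (Hamel-basis pathologies), so nothing is missing.
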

The proof is standard. The interested reader can see for example \cite{Cercignani_old}.  
\begin{lemma}\label{LEMMA:2}
Let $f:\mathbb{R}^n\rightarrow\mathbb{R}$ be a differentiable function which is homogeneous of degree one  (that is  $f(\lambda \mathbf{x}) = \lambda f(\mathbf{x})$ $\forall$ $\lambda\in\mathbb{R}$). Then $f$ is linear.
\end{lemma}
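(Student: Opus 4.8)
The plan is to read off the candidate linear map from the derivative at the origin and then let homogeneity propagate it to all of $\mathbb{R}^n$. First I would observe that putting $\lambda=0$ in the defining relation $f(\lambda\mathbf{x})=\lambda f(\mathbf{x})$ forces $f(\mathbf{0})=0$. Since $f$ is differentiable at the origin, I set $\mathbf{A}:=\nabla f(\mathbf{0})$ and record the first-order expansion
\[
f(\mathbf{x})=\mathbf{A}\cdot\mathbf{x}+o(\vert\mathbf{x}\vert),\qquad \mathbf{x}\to\mathbf{0}.
\]

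Next I would fix an arbitrary $\mathbf{x}\neq\mathbf{0}$ and evaluate this expansion along the ray through $\mathbf{x}$: for $\lambda\to 0^{+}$ one has $f(\lambda\mathbf{x})=\lambda\,\mathbf{A}\cdot\mathbf{x}+o(\lambda)$. On the other hand, homogeneity gives exactly $f(\lambda\mathbf{x})=\lambda f(\mathbf{x})$. Equating the two expressions and dividing by $\lambda>0$ yields
\[
f(\mathbf{x})=\mathbf{A}\cdot\mathbf{x}+o(1),\qquad \lambda\to 0^{+}.
\]
Because the left-hand side does not depend on $\lambda$, letting $\lambda\to 0^{+}$ gives $f(\mathbf{x})=\mathbf{A}\cdot\mathbf{x}$. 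This identity holds for every $\mathbf{x}\neq\mathbf{0}$ and trivially at $\mathbf{0}$, so $f$ coincides with the linear map $\mathbf{x}\mapsto\mathbf{A}\cdot\mathbf{x}$, which is the claim.

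The point I expect to be delicate is the genuine use of \emph{total} (Fréchet) differentiability at the origin, as opposed to the mere existence of partial or directional derivatives. Indeed, differentiating the homogeneity relation shows that each partial derivative of $f$ is homogeneous of degree zero, hence constant along rays; but this alone does not make it constant \emph{across} rays. The function $\mathbf{x}\mapsto x_1^3/\vert\mathbf{x}\vert^2$ (with value $0$ at the origin) is homogeneous of degree one and possesses well-defined partial derivatives at $\mathbf{0}$, yet it is not linear, precisely because it fails to be differentiable there. It is the total differentiability at the origin that controls the remainder uniformly in direction and lets the division-by-$\lambda$ step close. An alternative route through Euler's identity $\mathbf{x}\cdot\nabla f=f$ is available but requires the same regularity to finish, so I would favour the direct Taylor-expansion argument above.
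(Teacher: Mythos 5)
Your proof is correct, and it lands on the same underlying idea as the paper's --- read the candidate linear map off the derivative at the origin and let degree-one homogeneity propagate it outward --- but the mechanics are genuinely different. The paper differentiates the homogeneity identity with respect to $\lambda$ to obtain the Euler-type relation $f(\mathbf{x})=\nabla f(\lambda\mathbf{x})\cdot\mathbf{x}$ and then ``passes to the limit $\lambda\to 0$'' to replace $\nabla f(\lambda\mathbf{x})$ by $\nabla f(\mathbf{0})$; taken literally this uses differentiability of $f$ along the whole ray and, read naively, continuity of the gradient at the origin (it can be repaired by noting that $\lambda\mapsto f(\lambda\mathbf{x})$ is linear, so its derivative at $\lambda=0$, which equals $\nabla f(\mathbf{0})\cdot\mathbf{x}$ by the chain rule, already equals $f(\mathbf{x})$). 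Your route instead invokes only the Fr\'echet expansion $f(\mathbf{x})=\nabla f(\mathbf{0})\cdot\mathbf{x}+o(\vert\mathbf{x}\vert)$ at the single point $\mathbf{0}$ and lets homogeneity kill the remainder after dividing by $\lambda$; this is more economical in hypotheses (differentiability at the origin alone suffices) and sidesteps the limit-interchange issue entirely. Your closing remark correctly identifies why total differentiability, rather than mere existence of directional derivatives, is the essential hypothesis --- the counterexample $x_1^3/\vert\mathbf{x}\vert^2$ is exactly the right one --- and your explicit acknowledgement of the Euler-identity alternative is, in fact, a description of the paper's own argument.
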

\begin{proof}
The homogeneity means
\begin{equation*}
f(\lambda x_1, \ldots, \lambda x_n) = \lambda f(x_1, \ldots, x_n).
\end{equation*}
and by taking the derivatives with respect to $\lambda$ of both sides one has 
\begin{equation*}
f(\mathbf{x}) = \nabla_\mathbf{x} f  (\lambda\mathbf{x}) \cdot \mathbf{x}.
\end{equation*}
Passing to the limit as  $\lambda\rightarrow 0$, one gets
\begin{equation*}
f(\mathbf{x}) = \nabla_\mathbf{x} f \big\vert_{\mathbf{0}} \cdot \mathbf{x}
\end{equation*}
and therefore, being $\nabla_\mathbf{x} f \big\vert_{\mathbf{0}}$ a constant  vector,  it follows that $f(\mathbf{x})$ is  linear.
\end{proof}
Now we are in the position to establish the general form of the collision invariants. 
\begin{theorem}
Let $\varphi (\mathbf{k})$ a continuous function whose odd part is differentiable.  $\varphi (\mathbf{k})$ is a collisional invariant if and only if 
\begin{equation}
\varphi(\mathbf{k})=a+\mathbf{b}\cdot\mathbf{k}+c\vert\mathbf{k}\vert \label{coll_inv}
\end{equation}
for arbitrary $a, c \in \R$ and arbitrary $\mathbf{b} \in \R^2$.
\end{theorem}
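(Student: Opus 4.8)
The statement is an equivalence, and the ``if'' direction is immediate: for $\varphi=a+\mathbf{b}\cdot\mathbf{k}+c\vert\mathbf{k}\vert$ the constant contributes $2a$ to each side of \eq{EQ:inv_coll}, the term $\mathbf{b}\cdot\mathbf{k}$ is a collisional invariant by momentum conservation \eq{EQ:cons_mom_2}, and $c\vert\mathbf{k}\vert$ by energy conservation \eq{EQ:cons_en_2}. So the plan is to spend the work on the converse. The organizing idea is to split $\varphi=\varphi_e+\varphi_o$ into its even and odd parts. The constraints \eq{EQ:cons_mom_2}--\eq{EQ:cons_en_2} are invariant under the simultaneous reflection $\mathbf{k}\mapsto-\mathbf{k}$ of all four vectors (momentum conservation survives negation, energy conservation because $\vert-\mathbf{k}\vert=\vert\mathbf{k}\vert$); hence if $\varphi$ is an invariant so is $\mathbf{k}\mapsto\varphi(-\mathbf{k})$, and taking half-sum and half-difference shows that $\varphi_e$ and $\varphi_o$ are \emph{separately} collisional invariants. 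Since $\varphi_e$ is continuous and $\varphi_o$ is differentiable, the two lemmas can be applied exactly to the pieces for which they are needed, which is presumably why the hypothesis only asks the odd part to be differentiable.

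For the odd part I would exploit admissible collisions in which one outgoing momentum is sent to the Dirac point $\mathbf{0}$. Taking $\mathbf{k}=s\hat{u}$ and $\mathbf{k}_*=t\hat{u}$ collinear and co-directed ($s,t>0$), the choice $\mathbf{k}'=(s+t)\hat{u}$, $\mathbf{k}'_*=\mathbf{0}$ satisfies both \eq{EQ:cons_mom_2} and \eq{EQ:cons_en_2}, energy conservation holding precisely because the two incoming vectors are parallel. As $\varphi_o(\mathbf{0})=0$, the invariance condition collapses to $\varphi_o(s\hat{u})+\varphi_o(t\hat{u})=\varphi_o((s+t)\hat{u})$, i.e. Cauchy's equation along each ray. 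Differentiability forces $\varphi_o(s\hat{u})=s\,c(\hat{u})$, and combining with oddness on the opposite ray gives $\varphi_o(\lambda\mathbf{k})=\lambda\,\varphi_o(\mathbf{k})$ for every $\lambda\in\R$, so $\varphi_o$ is homogeneous of degree one. Lemma \ref{LEMMA:2} then yields $\varphi_o(\mathbf{k})=\mathbf{b}\cdot\mathbf{k}$.

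For the even part I would first remove the angular dependence using the antipodal collisions $\mathbf{k}_*=-\mathbf{k}$, for which the total momentum vanishes and any pair $\mathbf{k}'$, $\mathbf{k}'_*=-\mathbf{k}'$ with $\vert\mathbf{k}'\vert=\vert\mathbf{k}\vert$ is admissible. Invariance gives $2\varphi_e(\mathbf{k})=2\varphi_e(\mathbf{k}')$ for all $\vert\mathbf{k}'\vert=\vert\mathbf{k}\vert$, so $\varphi_e(\mathbf{k})=g(\vert\mathbf{k}\vert)$ is radial. Feeding this into the co-directed collinear collisions ($s,t,s',t'>0$, $s+t=s'+t'$) gives $g(s)+g(t)=g(s')+g(t')$ whenever $s+t=s'+t'$, equivalently the Jensen equation $g(s)+g(t)=2g\!\left(\frac{s+t}{2}\right)$, whose continuous solutions are the affine functions $g(r)=a+c\,r$. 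Hence $\varphi_e(\mathbf{k})=a+c\vert\mathbf{k}\vert$, and adding the two parts gives \eq{coll_inv}.

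The one genuinely delicate point, and the place I would concentrate the write-up, is the interplay between the two conservation laws when manufacturing the special collisions: energy conservation \eq{EQ:cons_en_2} together with the triangle inequality forces collinearity in any collision with a vanishing outgoing momentum, so the reduction to a three-term (Cauchy) relation is available only ray by ray. This is exactly why the odd part cannot be treated via the full additivity of Lemma \ref{LEMMA:1} and must instead be routed through homogeneity and Lemma \ref{LEMMA:2}, while the angular rigidity of the even part has to be supplied separately by the zero-total-momentum collisions.
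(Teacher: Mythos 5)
Your proposal is correct and follows essentially the same route as the paper: split $\varphi$ into even and odd parts, use antipodal collisions to make the even part radial and reduce it to a one-dimensional additive (Jensen/Cauchy) equation, and use collinear co-directed collisions to force degree-one homogeneity of the odd part, which Lemma \ref{LEMMA:2} then upgrades to linearity. The only differences are cosmetic: you work directly with explicit admissible collisions instead of introducing the intermediate function $\Phi(\vert\mathbf{k}\vert+\vert\mathbf{k}_*\vert,\mathbf{k}+\mathbf{k}_*)$, and you obtain homogeneity from the ray-wise Cauchy equation in one step rather than via the paper's induction over integers, rationals and a density argument.
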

\begin{proof}
The equation \eqref{EQ:inv_coll} implies that $\varphi+\varphi_*$ takes the same values for all the pairs of vectors  $(\mathbf{k},\mathbf{k}_*)$ that satisfy the conservation laws (\ref{conservation1}), (\ref{conservation2}), that is  $\varphi+\varphi_*$ is  constant whenever $\mathbf{k}+\mathbf{k}_*$ and $\vert\mathbf{k}\vert+\vert\mathbf{k}_*\vert$ are constant. This means that  $\varphi+\varphi_*$ is a function which depends only on such quantities, in other words 
\begin{equation}\label{EQ:rel1}
\varphi(\mathbf{k})+\varphi(\mathbf{k}_*) = \Phi(\vert\mathbf{k}\vert+\vert\mathbf{k}_*\vert, \mathbf{k}+\mathbf{k}_*)
\end{equation}
for a suitable function $\Phi$.
Let us introduce
\begin{equation*}\label{EQ:rel2}
\varphi_\pm (\mathbf{k}) = \varphi(\mathbf{k}) \pm \varphi(-\mathbf{k}),
\end{equation*}
\begin{equation*}\label{EQ:rel3}
\Phi_\pm(\vert\mathbf{k}\vert+\vert\mathbf{k}_*\vert, \mathbf{k}+\mathbf{k}_*) = \Phi(\vert\mathbf{k}\vert+\vert\mathbf{k}_*\vert, \mathbf{k}+\mathbf{k}_*)\pm\Phi(\vert\mathbf{k}\vert+\vert\mathbf{k}_*\vert, -\mathbf{k}-\mathbf{k}_*).
\end{equation*}
After the substitution $(\mathbf{k},\mathbf{k}_*)\rightarrow (-\mathbf{k},-\mathbf{k}_*)$, relation \eqref{EQ:rel1} becomes
\begin{equation}\label{EQ:rel4}
\varphi(-\mathbf{k})+\varphi(-\mathbf{k}_*) = \Phi(\vert\mathbf{k}\vert+\vert\mathbf{k}_*\vert, -\mathbf{k}-\mathbf{k}_*).
\end{equation}
\eqref{EQ:rel1} $+$ \eqref{EQ:rel4} $\Rightarrow$ $\varphi(\mathbf{k})+\varphi(-\mathbf{k})+\varphi(\mathbf{k}_*)+\varphi(-\mathbf{k}_*) =\Phi(\vert\mathbf{k}\vert+\vert\mathbf{k}_*\vert, \mathbf{k} + \mathbf{k}_*)+ \Phi(\vert\mathbf{k}\vert+\vert\mathbf{k}_*\vert, -\mathbf{k}-\mathbf{k}_*)$.\\
\eqref{EQ:rel1} $-$ \eqref{EQ:rel4} $\Rightarrow$ $\varphi(\mathbf{k})-\varphi(-\mathbf{k})+\varphi(\mathbf{k}_*)-\varphi(-\mathbf{k}_*) =\Phi(\vert\mathbf{k}\vert+\vert\mathbf{k}_*\vert, \mathbf{k} + \mathbf{k}_*)- \Phi(\vert\mathbf{k}\vert+\vert\mathbf{k}_*\vert, -\mathbf{k}-\mathbf{k}_*)$.\\
In a compact form one has 
\begin{equation}\label{EQ:rel5}
\varphi_\pm(\mathbf{k})+\varphi_\pm(\mathbf{k}_*) = \Phi_\pm(\vert\mathbf{k}\vert+\vert\mathbf{k}_*\vert, \mathbf{k}+\mathbf{k}_*).
\end{equation}
Observe that
\begin{equation*}\label{EQ:rel7}
\varphi_\pm(-\mathbf{k})=\pm\varphi_\pm(\mathbf{k}).
\end{equation*}
If we take $\mathbf{k}_*=-\mathbf{k}$  then
\begin{equation}
\varphi_+(\mathbf{k})+\varphi_+(-\mathbf{k}) = 2\varphi_+(\mathbf{k})= \Phi_+(2\vert\mathbf{k}\vert,\mathbf{0}).
\end{equation}
As a consequence  $\varphi_+$ depends only on  $\vert\mathbf{k}\vert$, that is $\varphi_+ (\mathbf{k}) =\psi(\vert\mathbf{k}\vert)$. This implies, on account of  \eqref{EQ:rel5}, that  $\Phi_+$ depends only on $\vert\mathbf{k}\vert+\vert\mathbf{k}_*\vert$ because no function (apart from the constant case) of $\mathbf{k}+\mathbf{k}_*$ can be built from  $\vert\mathbf{k}\vert$ and $\vert\mathbf{k}_*\vert$. In fact, by absurd, let us suppose that $f(\mathbf{k}+\mathbf{k}_*)=g(\vert\mathbf{k}\vert,\vert\mathbf{k}_*\vert)$ then  setting $\mathbf{k}_*=\mathbf{0}$, one has $f(\mathbf{k})=h(\vert\mathbf{k}\vert)$. Therefore $f(\mathbf{k}+\mathbf{k}_*)=h(\vert\mathbf{k}+\mathbf{k}_*\vert)$, that is $h(\sqrt{\vert\mathbf{k}\vert^2+\vert\mathbf{k}_*\vert^2+2\mathbf{k}\cdot\mathbf{k}_*})=g(\vert\mathbf{k}\vert,\vert\mathbf{k}_*\vert)$. By choosing  $\mathbf{k}\cdot\mathbf{k}_*=0$, one gets $h(\sqrt{\vert\mathbf{k}\vert^2+\vert\mathbf{k}_*\vert^2})=g(\vert\mathbf{k}\vert,\vert\mathbf{k}_*\vert)$; instead by choosing $\mathbf{k}_*=\mathbf{k}$, one finds $h(\vert\mathbf{k}\vert+\vert\mathbf{k}_*\vert)=g(\vert\mathbf{k}\vert,\vert\mathbf{k}_*\vert)$. The two expressions of $g$ are compatible only in the cases $\mathbf{k} = {\bf 0}$ or
$\mathbf{k}_*= {\bf 0}$. This implies that  $h$ must be constant leading to a contradiction.  

Now from \eqref{EQ:rel5} it is  possible to write
\begin{equation*}
\psi(\vert\mathbf{k}\vert)+\psi(\vert\mathbf{k}_*\vert)=\Phi_+(\vert\mathbf{k}\vert+\vert\mathbf{k}_*\vert).
\end{equation*}
Setting $\mathbf{k}_*=\mathbf{0}$ we find
\begin{equation*}
\psi(\vert\mathbf{k}\vert)+\psi(0)=\Phi_+(\vert\mathbf{k}\vert)
\end{equation*}
and therefore 
\begin{equation*}
\psi(\vert\mathbf{k}\vert+\vert\mathbf{k}_*\vert)+\psi(0)=\Phi_+(\vert\mathbf{k}\vert+\vert\mathbf{k}_*\vert)
\end{equation*}
wherefrom
\begin{equation*}
\psi(\vert\mathbf{k}\vert)+\psi(\vert\mathbf{k}_*\vert)=\psi(\vert\mathbf{k}\vert+\vert\mathbf{k}_*\vert)+\psi(0).
\end{equation*}
Summing up to both sides the quantity $-2 \psi(0)$ one get
\begin{equation*}
\psi(\vert\mathbf{k}\vert)-\psi(0)+\psi(\vert\mathbf{k}_*\vert)-\psi(0)=\psi(\vert\mathbf{k}\vert+\vert\mathbf{k}_*\vert)-\psi(0).
\end{equation*}
Setting $f(x)=\psi(x)-\psi(0)$ one has
\begin{equation*}
f(\vert\mathbf{k}\vert)+f(\vert\mathbf{k}_*\vert)=f(\vert\mathbf{k}\vert+\vert\mathbf{k}_*\vert).
\end{equation*}
Applying  Lemma \ref{LEMMA:1} in the one dimensional case, we can conclude that $f(\vert\mathbf{k}\vert)=2c\vert\mathbf{k}\vert$. As a consequence
\begin{equation*}
\varphi_+(\vert\mathbf{k}\vert)=\psi(\vert\mathbf{k}\vert)=f(\vert\mathbf{k}\vert)+\psi(0)=2c\vert\mathbf{k}\vert+2a,
\end{equation*}
where $\psi(0)=2a$ is a  constant.

Let us consider the equation \eqref{EQ:rel5} for $\varphi_-$ and observe that
\begin{equation*}
(\mathbf{k}+\mathbf{k}_*)^2 = \vert\mathbf{k}\vert^2+\vert\mathbf{k}_*\vert^2+2\mathbf{k}\cdot\mathbf{k}_*= (\vert\mathbf{k}\vert+\vert\mathbf{k}_*\vert)^2 +2(\mathbf{k}\cdot\mathbf{k}_*-\vert\mathbf{k}\vert\vert\mathbf{k}_*\vert).
\end{equation*}
Then $\Phi_-$ depends only on $\mathbf{k}+\mathbf{k}_*$ if $\mathbf{k}\cdot\mathbf{k}_*=\vert\mathbf{k}\vert\vert\mathbf{k}_*\vert$. In this way one has 
\begin{equation*}
\varphi_-(\mathbf{k})+\varphi_-(\mathbf{k}_*)=h(\mathbf{k}+\mathbf{k}_*).
\end{equation*}
If we choose $\mathbf{k}_*=\mathbf{0}$ and observe that on account of  \eqref{EQ:rel2} one has $\varphi_-(\mathbf{0})=0$, we get 
\begin{equation*}
\varphi_-(\mathbf{k})=h(\mathbf{k})
\end{equation*}
wherefrom
\begin{equation}\label{EQ:rel6}
\varphi_-(\mathbf{k})+\varphi_-(\mathbf{k}_*)=\varphi_-(\mathbf{k}+\mathbf{k}_*).
\end{equation}
under the condition $\mathbf{k}\cdot\mathbf{k}_*=\vert\mathbf{k}\vert\vert\mathbf{k}_*\vert$. 

If we choose $\mathbf{k}_*=\mathbf{k}$ one has
\begin{equation*}
2\varphi_-(\mathbf{k})=\varphi_-(2\mathbf{k}).
\end{equation*}
We want to prove that
\begin{equation*}
n\varphi_-(\mathbf{k})=\varphi_-(n\mathbf{k}), \forall n\in\mathbb{N}
\end{equation*}
by induction. In fact, the formula is true  for  $n=0,1,2$. Let us suppose that  $(n-1)\varphi_-(\mathbf{k})=\varphi_-((n-1)\mathbf{k})$ holds true and evaluate
\begin{equation*}
n\varphi_-(\mathbf{k})=(n-1)\varphi_-(\mathbf{k})+\varphi_-(\mathbf{k}) =\varphi_-((n-1)\mathbf{k})+\varphi_-(\mathbf{k}).
\end{equation*}
Observing that
\begin{equation*}
(n-1)\mathbf{k}\cdot\mathbf{k}=(n-1)\vert\mathbf{k}\vert^2=\vert (n-1)\mathbf{k}\vert\vert\mathbf{k}\vert,
\end{equation*}
 by \eqref{EQ:rel6} we find
\begin{equation*}
n\varphi_-(\mathbf{k})=\varphi_-((n-1+1)\mathbf{k})=\varphi_-(n\mathbf{k}).
\end{equation*}
Let us consider now for each  $n,m\in\mathbb{N}$
\begin{equation*}
m\varphi_-\left(\frac{n}{m}\mathbf{k}\right) = \varphi_-(m\frac{n}{m}\mathbf{k})=\varphi_-(n\mathbf{k})=n\varphi_-(\mathbf{k}),
\end{equation*}
which implies
\begin{equation*}
\varphi_-(q\mathbf{k})=q\varphi_-(\mathbf{k}), \qquad \forall q\in\mathbb{Q}^+
\end{equation*}
The functions $\varphi_-(q\mathbf{k})$ and $q\varphi_-(\mathbf{k})$ are continuous and have the same values for each  $q\in\mathbb{Q}$. By a density argument,  $\varphi_-(\alpha\mathbf{k})=\alpha\varphi_-(\mathbf{k})$ for each $\alpha\in\mathbb{R}$, that is $\varphi_-$ is a homogeneous function in $\R$.

By virtue of  Lemma \ref{LEMMA:2} and the differentiability of  $\varphi_-$,  it follows that  $\varphi_-$ is linear and therefore by Lemma \ref{LEMMA:1} we have 
\begin{equation}
\varphi_-(\mathbf{k})=2\mathbf{b}\cdot\mathbf{k},
\end{equation}
for a suitable constant vector  $\mathbf{b}$. Since $\varphi=(\varphi_+ + \varphi_-)/2$, the proof is complete.
\end{proof}

\begin{remark}
Going back to the frame centered in $\Gamma$, the collisional invariants take the form
\begin{equation}
\varphi(\mathbf{k}) = a + \mathbf{b}\cdot(\mathbf{k}-\mathbf{k}_l)+c\vert\mathbf{k}-\mathbf{k}_l\vert
\end{equation}
or equivalently
\begin{equation}
\varphi(\mathbf{k}) = a + \mathbf{b}\cdot\mathbf{k}+c\vert\mathbf{k}-\mathbf{k}_l\vert.
\end{equation}\end{remark}

\begin{remark}
The expression (\ref{coll_inv}) is formally the same as that found in the relativistic case (see \cite{Cercignani} equation (2.56) for particles having zero mass). However, some crucial differences are present.  The proof we propose requires only the differentiability of the odd part of the collisional invariant while the proof in \cite{Cercignani} is based on the assumption that $\varphi$ is $C^2$ which is not true in the case under investigation.

Our results can be cast in the form expressed by equation (2.56) in  \cite{Cercignani} by replacing the velocity of light $c$ with the Fermi velocity.  It is clear that using Lorentz transformations with $v_F$ instead of $c$ has not a physical rationale because it should mean that the electron velocity inside graphene is invariant with respect to inertial observers.  To our knowledge there is no experimental evidence of that. 
\end{remark}

Before studying the kernel of the electron-electron collision operator, we recall the following symmetry properties  that can be directly verified from the expression of $S_{ee}$:
\begin{eqnarray}
S_{ee}(\mathbf{k}_1,\mathbf{k}_2,\mathbf{k}'_1,\mathbf{k}'_2) = S_{ee}(\mathbf{k}_2,\mathbf{k}_2,\mathbf{k}'_1,\mathbf{k}'_2) =
S_{ee}(\mathbf{k}'_1,\mathbf{k}'_2,\mathbf{k}_1,\mathbf{k}_2). \label{See_symmetries}
\end{eqnarray}
\begin{corollary}
The kernel of the electron-electron collision operator is given by the function of the form
\begin{equation}
f(\mathbf{k}) = \frac{1}{1 + \exp \left(a + \mathbf{b}\cdot\mathbf{k}+c\vert\mathbf{k}-\mathbf{k}_l\vert  \right)}  \label{equilibriumg}
\end{equation}
with $a, c \in \R$, $\mathbf{b} \in \R^2$.
\end{corollary}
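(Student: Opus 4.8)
The kernel is the set of occupation numbers $f$ that annihilate the electron-electron collision operator $Q_{ee}[f]$ appearing on the third and fourth lines of (\ref{transport}); these are the equilibria. The plan is to reduce the equation $Q_{ee}[f]=0$ to the collisional-invariant equation (\ref{EQ:inv_coll}) already solved by the Theorem. First I would use the microreversibility expressed by (\ref{See_symmetries}): since the gain rate $S_{ee}(\mathbf{k}',\mathbf{k}'_*,\mathbf{k},\mathbf{k}_*)$ equals the loss rate $S_{ee}(\mathbf{k},\mathbf{k}_*,\mathbf{k}',\mathbf{k}'_*)$, each summand of $Q_{ee}[f]$ is $S_{ee}(\mathbf{k},\mathbf{k}_*,\mathbf{k}',\mathbf{k}'_*)\,(A-B)$, where, with the shorthand $f=f(\mathbf{k})$, $f_*=f(\mathbf{k}_*)$, $f'=f(\mathbf{k}')$, $f'_*=f(\mathbf{k}'_*)$,
\[
A=f'f'_*(1-f)(1-f_*),\qquad B=ff_*(1-f')(1-f'_*).
\]

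The decisive step is to show that $Q_{ee}[f]=0$ is equivalent to the pointwise detailed-balance condition $A=B$ on the whole manifold of configurations obeying (\ref{EQ:cons_mom_2})--(\ref{EQ:cons_en_2}). To this end I would set $\psi(\mathbf{k}):=\log\dfrac{f(\mathbf{k})}{1-f(\mathbf{k})}$, multiply $Q_{ee}[f]$ by $\psi$, sum over $\mathbf{k}$, and symmetrize the collision variables using the permutation symmetries (\ref{See_symmetries}). The standard manipulation turns $\sum_{\mathbf{k}}Q_{ee}[f]\,\psi$ into
\[
-\frac14\sum S_{ee}\,(A-B)\log\frac{A}{B},
\]
which is non-positive because $S_{ee}\ge 0$ and $(A-B)\log\frac{A}{B}\ge 0$, with equality if and only if $A=B$. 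Hence $Q_{ee}[f]=0$ forces $A=B$ everywhere (the converse being immediate, since $A=B$ makes every summand vanish).

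Finally, writing $A=B$ as $\dfrac{f'}{1-f'}\dfrac{f'_*}{1-f'_*}=\dfrac{f}{1-f}\dfrac{f_*}{1-f_*}$ and taking logarithms gives
\[
\psi(\mathbf{k}')+\psi(\mathbf{k}'_*)=\psi(\mathbf{k})+\psi(\mathbf{k}_*),
\]
so $\psi$ is a collisional invariant in the sense of (\ref{EQ:inv_coll}). Since $0<f<1$ makes $\psi$ continuous, and assuming its odd part is differentiable, the Theorem yields $\psi(\mathbf{k})=a+\mathbf{b}\cdot\mathbf{k}+c\vert\mathbf{k}-\mathbf{k}_l\vert$ (passing to the $\Gamma$-centered frame as in the Remark, and absorbing the sign into the arbitrary constants $a,c\in\R$, $\mathbf{b}\in\R^2$). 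Solving $\log\dfrac{f}{1-f}=-\bigl(a+\mathbf{b}\cdot\mathbf{k}+c\vert\mathbf{k}-\mathbf{k}_l\vert\bigr)$ for $f$ then produces exactly (\ref{equilibriumg}).

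I expect the main obstacle to be the second step, namely the rigorous reduction of $Q_{ee}[f]=0$ to the pointwise relation $A=B$. This rests on the legitimacy of the symmetrization allowed by (\ref{See_symmetries}) and on the elementary inequality $(A-B)\log(A/B)\ge 0$; care is also needed to justify interchanging the order of summation and the use of the logarithmic multiplier, where the boundedness $0<f<1$ of the occupation numbers is essential. A secondary, more technical point is verifying the regularity hypothesis of the Theorem---continuity of $\psi$ and differentiability of its odd part---which must be inherited from the corresponding properties assumed on $f$.
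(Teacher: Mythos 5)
Your proposal follows essentially the same route as the paper: the paper also multiplies $Q_{ee}(f)$ by $H(f)=\log\frac{f(\mathbf{k})}{1-f(\mathbf{k})}$, uses the symmetrized weak form (\ref{int_collision}) (whose factor $g+g_*-g'-g'_*$ is exactly $-\log(A/B)$ for your $A$, $B$) to conclude $\int_{\R^2} Q_{ee}(f)\,H(f)\,d\mathbf{k}\le 0$ with equality precisely when $H(f)$ is a collisional invariant, and then invokes the Theorem to obtain (\ref{equilibriumg}). The details you flag as delicate (legitimacy of the symmetrization, regularity of $H(f)$) are the same ones the paper delegates to the cited standard references, so your argument matches the intended proof.
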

\begin{proof}
Let us denote by $Q_{ee}$  the electron-electron collision operator for a single band.
We have
\begin{eqnarray*}
Q_{ee} & = &\sum_{\mathbf{k}_*, \mathbf{k}', \mathbf{k}'_*} \left[ S_{ee} (\mathbf{k}',\mathbf{k}'_*,\mathbf{k},\mathbf{k}_*) f (\mathbf{k}') f(\mathbf{k}'_*)  (1-f(\mathbf{k}))(1-f(\mathbf{k}_*))\right. \nonumber\\
& & \left. - S_{ee}(\mathbf{k},\mathbf{k}_*,\mathbf{k}',\mathbf{k}'_*) f(\mathbf{k})f(\mathbf{k}_*) (1-f(\mathbf{k}'))(1-f(\mathbf{k}'_*)) \right]
\, \delta_{\mathbf{k} + \mathbf{k}_* = \mathbf{k}' + \mathbf{k}'_*}\\
& & =  \int  \sum_{\mathbf{k}_*, \mathbf{k}'} \left[ S_{ee} (\mathbf{k}',\mathbf{k}'_*,\mathbf{k},\mathbf{k}_*) f (\mathbf{k}') f(\mathbf{k}'_*)  (1-f(\mathbf{k}))(1-f(\mathbf{k}_*))\right. \nonumber\\
& &\left. - S_{ee}(\mathbf{k},\mathbf{k}_*,\mathbf{k}',\mathbf{k}'_*) f(\mathbf{k})f(\mathbf{k}_*) (1-f(\mathbf{k}'))(1-f(\mathbf{k}'_*)) \right] 
\delta \left(\mathbf{k} + \mathbf{k}_* -  \mathbf{k}' - \mathbf{k}'_*  \right).
d \mathbf{k}'_*
\end{eqnarray*}
After homogenization \cite{Jacoboni} $Q_{ee}$ becomes
\begin{eqnarray*}
& &\frac{1}{(2 \pi)^4}
\int  \left[ \sigma_{ee} (\mathbf{k}',\mathbf{k}'_*,\mathbf{k},\mathbf{k}_*) f (\mathbf{k}') f(\mathbf{k}'_*)  (1-f(\mathbf{k}))(1-f(\mathbf{k}_*))\right. \nonumber\\
& &\left. - \sigma_{ee}(\mathbf{k},\mathbf{k}_*,\mathbf{k}',\mathbf{k}'_*) f(\mathbf{k})f(\mathbf{k}_*) (1-f(\mathbf{k}'))(1-f(\mathbf{k}'_*)) \right] \\
& &
\times \delta \left(\varepsilon(\mathbf{k})  + \varepsilon(\mathbf{k}_*)  - \varepsilon(\mathbf{k}')  - \varepsilon(\mathbf{k}'_*) \right)
 d \mathbf{k}_*  d \mathbf{k}'  d \mathbf{k}'_*,
\end{eqnarray*}
where 
$$
 \sigma_{ee}(\mathbf{k},\mathbf{k}_*,\mathbf{k}',\mathbf{k}'_*) = \frac{2\pi}{\hbar}\vert M  (\mathbf{k}',\mathbf{k}'_*,\mathbf{k},\mathbf{k}_*) \vert^2 \delta \left(\mathbf{k} + \mathbf{k}_* -  \mathbf{k}' - \mathbf{k}'_*  \right) 
$$
which has the same symmetry properties of $S_{ee}$.

By standard arguments (see for example \cite{AbDe,Juengel}) one can prove that for any function $g(\mathbf{k})$ regular enough the following relation holds
\begin{eqnarray}
&&\int_{\R^2} Q_{ee} (f) g(\mathbf{k}) d \mathbf{k}  = \frac{1}{(2 \pi)^4}
\int \frac{\sigma_{ee}(\mathbf{k}, \mathbf{k}_*, \mathbf{k}',\mathbf{k}'_*)}{4} \delta \left(\varepsilon(\mathbf{k})  + \varepsilon(\mathbf{k}_*)  - \varepsilon(\mathbf{k}')  - \varepsilon(\mathbf{k}'_*) \right) \nonumber\\
&&\times \left(f(\mathbf{k}') f (\mathbf{k}'_*) (1 - f (\mathbf{k}))(1 - f(\mathbf{k}_*))   - f (\mathbf{k}) f (\mathbf{k}_*) (1 - f (\mathbf{k}'))(1 - f (\mathbf{k}'_*))  \right) \nonumber\\
&& \times \left(g(\mathbf{k}) + g(\mathbf{k}_*) - g(\mathbf{k}') - g(\mathbf{k}'_*) \right)  d \mathbf{k}  d \mathbf{k}_*  d \mathbf{k}'  d \mathbf{k}'_*. \label{int_collision}
\end{eqnarray}
With the choice 
$$
g(\mathbf{k}) = H (f(\mathbf{k})) = \log \frac{f(\mathbf{k})}{1 - f(\mathbf{k})}
$$
it is easy to verify that 
$$
\int_{\R^2} Q_{ee} (f)  \log \frac{f(\mathbf{k})}{1 - f(\mathbf{k})} d \mathbf{k} \le 0
$$
and 
$$
\int_{\R^2} Q_{ee} (f)  \log \frac{f(\mathbf{k})}{1 - f(\mathbf{k})} d \mathbf{k} = 0
$$
if and only if 
$$
H (f(\mathbf{k})) + H (f(\mathbf{k}')) = H (f(\mathbf{k}_1)) + H (f(\mathbf{k}_1'))
$$
provided the relations (\ref{conservation1})-(\ref{conservation2}) are satisfied. Therefore, $H (f(\mathbf{k}))$ must be a collisional invariant. On account of the previous theorem
$$
H (f(\mathbf{k})) = a + \mathbf{b}\cdot\mathbf{k}+c\vert\mathbf{k}-\mathbf{k}_l\vert,
$$
with $a, c \in \R$, $\mathbf{b} \in \R^2$  arbitrary,
wherefrom the expression (\ref{equilibriumg}) is obtained. 
\end{proof}
\begin{remark}
If we set $a = - \dfrac{\varepsilon_F}{k_B T} $, $\mathbf{b} = \mathbf{0}$ and $c = \pm \dfrac{\hbar v_F}{k_B T}$,  where $T$ is the crystal temperature, $\varepsilon_F$ is the Fermi energy and $k_B$ is the Boltzmann constant, the standard Fermi-Dirac distribution 
\begin{equation}
f(\mathbf{k}) = \frac{1}{1 + \exp \left(\dfrac{\varepsilon(\vert\mathbf{k}-\mathbf{k}_l\vert )-  \varepsilon_F}{k_B T}  \right)}, 
\end{equation}
is recovered, the  case of positive and negative sign holding in the conduction and valence band respectively.
\end{remark}
\begin{remark}
By setting $g(\mathbf{k}) = 1$ from (\ref{int_collision}) one has the charge conservation (in the unipolar case)
$$\int_{\R^2} Q_{ee} (f)  d \mathbf{k}  = 0,$$
by setting $g(\mathbf{k}) = \mathbf{k}$ from (\ref{int_collision}) one has the  conservation of the crystal momentum$$\int_{\R^2} \mathbf{k}\,  Q_{ee} (f)  d \mathbf{k}  = {\bf 0},$$
while by setting $g(\mathbf{k}) =  \varepsilon(\mathbf{k})$ from (\ref{int_collision}) one has the energy conservation
$$\int_{\R^2} \varepsilon(\mathbf{k}) \, Q_{ee} (f) d \mathbf{k}  = 0.$$
\end{remark}
\begin{remark}
For the electron-phonon scattering by following \cite{Majorana,AbDe}, it is possible to see that the collisional invariants are functions of the type
$$
\varphi(\mathbf{k}) = a + c \vert\mathbf{k}-\mathbf{k}_l\vert.
$$ 
and the kernel is spanned by the functions of the form
\begin{equation}
f(\mathbf{k}) = \frac{1}{1 + \exp \left(a + c\vert\mathbf{k}-\mathbf{k}_l\vert  \right)}.
\end{equation}
\end{remark}

\section{Electron-electron scattering rate: the intra-band case}

Now we want to address the importance of the electron-electron scattering by Monte Carlo simulations. Other approaches can be used but the complexity
of the electron electron collision term makes direct simulations, e.g. based on WENO scheme \cite{LMS} or Discontinuous Galerkin methods \cite{NaCaRo},  a daunting computational task, unless drastic simplifications are introduced \cite{Battiato1,Battiato2}. We will adopt the Monte Carlo scheme presented in \cite{RoMajCo} which allows a correct inclusion of the Pauli exclusion principle.
The details are omitted, the interested reader is referred to reference   \cite{RoMajCo,CoMajRo,CoRo}.
Both the intra-band and the inter-band cases will be tackled. 

First we consider intra-band scattering.

The scattering rate due to the electron-electron interaction  is given by
\begin{align*}
\lambda_{ee} (\mathbf{k}_1) & = \sum_{\mathbf{k}_2} f(\mathbf{k}_2) \sum_{\mathbf{k}'_1,\mathbf{k}'_2} S_{ee}(\mathbf{k}_1,\mathbf{k}_2,\mathbf{k}'_1,\mathbf{k}'_2).
\end{align*}
under the conditions of momentum and energy conservation 
\begin{equation}\label{EQ:cons_mom}
\mathbf{k}_1+\mathbf{k}_2 = \mathbf{k}'_1+\mathbf{k}'_2,
\end{equation}
\begin{equation}\label{EQ:cons_en}
\vert\mathbf{k}_1\vert+\vert\mathbf{k}_2\vert = \vert\mathbf{k}'_1\vert+\vert\mathbf{k}'_2\vert,
\end{equation}
with $\mathbf{k}_1$, $\mathbf{k}_2$ states before the scattering and $\mathbf{k}'_1$, $\mathbf{k}'_2$ states after the collision. 

The previous relations imply that, given the initial states, the final ones must belong to an ellipse $\Gamma$ (see Fig. \ref{Schematic_gamma}). Note that this is true both in the conduction and in the valence band.

If we denote by $a$, $b$ e $c$ the length of the major, minor  and focal semi-axis, assigned  $\mathbf{k}_1$ and $\mathbf{k}_2$, one has
\begin{equation*}
\vert \mathbf{k}_1 \vert + \vert \mathbf{k}_2 \vert = 2a.
\end{equation*}
Moreover,  the parallelogram rule gives
\begin{equation*}
\vert \mathbf{k}_1 + \mathbf{k}_2 \vert = 2c
\end{equation*}
 and from the relation $a^2=b^2+c^2$ one gets $b$. 

Under the above considerations,
\begin{align*}
\lambda_{ee} (\mathbf{k}_1) & = \sum_{\mathbf{k}_2} f(\mathbf{k}_2) \sum_{\mathbf{k}'_1,\mathbf{k}'_2} \frac{2\pi}{\hbar} \vert M \vert^2 \delta(\varepsilon(\mathbf{k}'_1)+\varepsilon(\mathbf{k}'_2)-\varepsilon(\mathbf{k}_1)-\varepsilon(\mathbf{k}_2)) =\\
& = \sum_{\mathbf{k}_2} f(\mathbf{k}_2) \sum_{\mathbf{k}'_1} \frac{2\pi}{\hbar} \vert M \vert^2(q,q') \delta(\varepsilon(\mathbf{k}'_1)+\varepsilon(\mathbf{k}_1+\mathbf{k}_2-\mathbf{k}'_1)-\varepsilon(\mathbf{k}_1)-\varepsilon(\mathbf{k}_2)).
\end{align*}
and by homogenization \cite{Jacoboni}  we have
\begin{align*}
\lambda_{ee} (\mathbf{k}_1) & \approx \frac{1}{(2\pi)^4} \int_{\mathbb{R}^2}\left[  f(\mathbf{k}_2) \int_{\mathbb{R}^2} \! \frac{2\pi}{\hbar^2 v_F} \vert M \vert^2(q,q') \delta(\vert\mathbf{k}'_1\vert+\vert\mathbf{k}_1+\mathbf{k}_2-\mathbf{k}'_1\vert-\vert\mathbf{k}_1\vert-\vert\mathbf{k}_2\vert) \, \diff \mathbf{k}'_1 \right] \, \diff \mathbf{k}_2.
\end{align*}
 In the wave-vector $\hbar\mathbf{k}$ space let us consider a mesh made of 
 $N\times N$ square cells of edge $\Delta_{\hbar\mathbf{k}}$. 
 If we characterize each cell by a couple of integers $(i,j)$ and denote by $f_{ij}$ the mean value of $f(\mathbf{k}_2)$ in the cell, one can use the middle point approximation 
\begin{align*}
\lambda_{ee} (\mathbf{k}_1) & \approx \frac{1}{(2\pi)^3\hbar^2 v_F} \sum_{i,j=1}^{N} f_{ij} \frac{\Delta_{\hbar\mathbf{k}}^2}{\hbar^2} \left[   \int_{\mathbb{R}^2} \! \vert M \vert^2(q,q') \delta(\vert\mathbf{k}'_1\vert+\vert\mathbf{k}_1+\mathbf{k}_2-\mathbf{k}'_1\vert-\vert\mathbf{k}_1\vert-\vert\mathbf{k}_2\vert) \, \diff \mathbf{k}'_1 \right] =\\
& = \frac{\Delta_{\hbar\mathbf{k}}^2}{(2\pi)^3\hbar^4 v_F} \sum_{i,j=1}^{N} f_{ij} \int_\Gamma \! \vert M \vert^2(q,q') \, \diff \Gamma.
\end{align*}
To complete the evaluation we need a parametric representation of  $\Gamma$.

Let us consider an electron having wave-vector $\mathbf{k}_1$ and associated with it a companion electron having wave-vector $\mathbf{k}_2$. Besides the initial reference frame
$\vec{k}_x O \vec{k}_y$, 
let us consider the reference frame $\vec{k}'_x F_1 \vec{k}'_y$, where $F_1$ is one of the two foci,  $\vec{k}'_x \parallel \mathbf{k}_1+\mathbf{k}_2$ and $\vec{k}'_y \perp \vec{k}'_x$.
Without loss of generality we can choose $ O \equiv F_1 \equiv K$ with $K$ Dirac point (see Fig. \ref{Schematic_gamma}). Let us denote by $\theta$ the angle between $\vec{k}'_x$ and $\vec{k}_x$. 

\begin{figure}[ht]
\centering
\includegraphics[height=8.0cm]{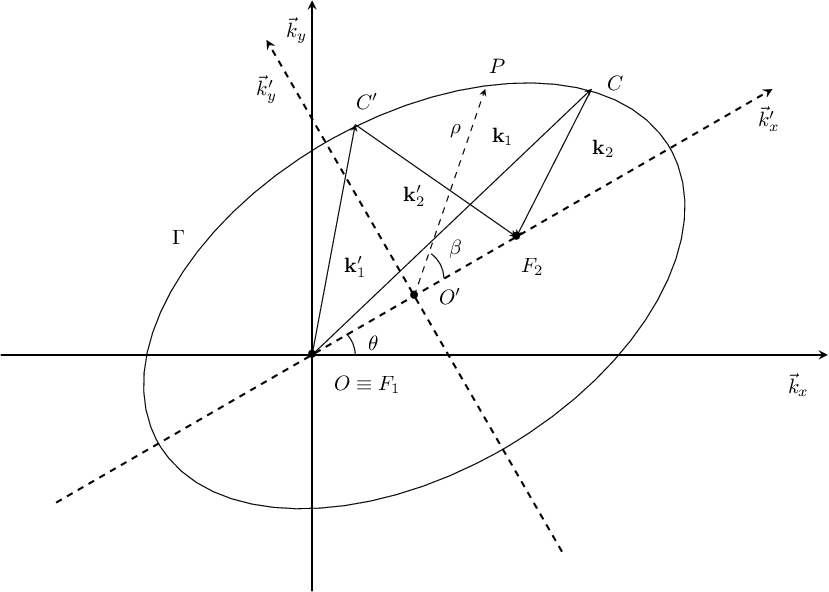} 
\caption{Schematic representation of the electron states before and after the scattering in the original and elliptic reference frames.} \label{Schematic_gamma}
\end{figure}

Given a point $P$ belonging to the ellipses, let us denote by $\beta$ the angle between   $P-O'$ the $\vec{k}'_x$-axes (see Fig. \ref{Schematic_gamma}), $O'$ being the center of the ellipses. 

The parametric equations of the $\Gamma$ are  
\begin{equation*}
\Gamma:\left\lbrace
\begin{aligned}
\vec{k}_x & = \frac{(\mathbf{k}_1+\mathbf{k}_2)_x}{2} + a\cos\beta\cos\theta-b\sin\beta\sin\theta\\
\vec{k}_y & = \frac{(\mathbf{k}_1+\mathbf{k}_2)_y}{2} + a\cos\beta\sin\theta+b\sin\beta\cos\theta
\end{aligned}
\right.  \qquad \beta \in [0,2\pi[
\end{equation*}
and
\begin{align*}
\diff\Gamma  = \sqrt{\left(\frac{\diff \vec{k}_x}{\diff\beta}\right)^2+\left(\frac{\diff \vec{k}_y}{\diff\beta}\right)^2} \, d \beta
 = \sqrt{a^2\sin^2\beta+b^2\cos^2\beta}  \, d \beta.
\end{align*}
The post collision vectors $\mathbf{k}'_1$ and $\mathbf{k}'_2$ are given by
\begin{equation*}
\mathbf{k}'_1 = C'-F_1 = \left(\frac{(\mathbf{k}_1+\mathbf{k}_2)_x}{2} + a\cos\beta\cos\theta-b\sin\beta\sin\theta , \frac{(\mathbf{k}_1+\mathbf{k}_2)_y}{2} + a\cos\beta\sin\theta+b\sin\beta\cos\theta\right),
\end{equation*}
\begin{equation*}
\mathbf{k}'_2 = F_2-C' = \left(\frac{(\mathbf{k}_1+\mathbf{k}_2)_x}{2} - a\cos\beta\cos\theta+b\sin\beta\sin\theta , \frac{(\mathbf{k}_1+\mathbf{k}_2)_y}{2} - a\cos\beta\sin\theta-b\sin\beta\cos\theta\right).
\end{equation*}
Therefore, the electron-electron scattering rate can be evaluated as
\begin{align*}
\lambda_{ee} & \approx \frac{\Delta_{\hbar\mathbf{k}}^2}{(2\pi)^3\hbar^4 v_F} \sum_{i,j=1}^{N} f_{ij} \int_\Gamma \! \vert M \vert^2(q,q') \, \diff \Gamma =\\
& = \frac{\Delta_{\hbar\mathbf{k}}^2}{(2\pi)^3\hbar^4 v_F} \sum_{i,j=1}^{N} f_{ij} \int_0^{2\pi} \! \vert M \vert^2(q(\beta),q'(\beta)) \sqrt{a^2\sin^2\beta + b^2\cos^2\beta} \, \diff \beta,
\end{align*}
where 
\begin{equation*}
q(\beta)=\vert \mathbf{k}_1-\mathbf{k}'_1 \vert, \quad 
q'(\beta) = \vert \mathbf{k}_1-\mathbf{k}'_2\vert.
\end{equation*}

Introducing an uniform mesh of $[0,2\pi]$ of size $\Delta \beta = 2\pi / m$, $0 = \beta_0 <\beta_1 < \ldots < \beta_k < \beta_{k+1} < \ldots < \beta_{m} = 2\pi$,
and applying the trapezoidal rule, one gets  
\begin{align*}
\lambda_{ee} & \approx \frac{\Delta_{\hbar\mathbf{k}}^2}{(2\pi)^3\hbar^4 v_F} \sum_{i,j=1}^{N} f_{ij} \frac{\Delta \beta}{2} \sum_{k=1}^m \left[ g(\beta_{k-1})+g(\beta_k) \right]
%
& = C_{ee} \sum_{i,j=1}^{N} f_{ij} \sum_{k=1}^m \left[ \widetilde{g}(\beta_{k-1})+\widetilde{g}(\beta_k) \right],
\end{align*}
where
\begin{eqnarray*}
& &C_{ee} = \frac{r_s^2 \kappa^2 v_F g_s g_v \Delta_{\hbar\mathbf{k}}^2 \Delta \beta}{512\pi\hbar^2}\\
& &\widetilde{g}(\beta_k) = \vert \widetilde{M} \vert^2(q(\beta_k),q'(\beta_k)) \sqrt{a^2\sin^2\beta_k + b^2\cos^2\beta_k},\\
& & \vert \widetilde{M} \vert^2 (q(\beta_k),q'(\beta_k)) 
= \vert\widetilde{V}(q(\beta_k))\vert^2 + \vert\widetilde{V}(q'(\beta_k))\vert^2 -\widetilde{V}(q(\beta_k))\widetilde{V}(q'(\beta_k)),\\
& &\widetilde{V}(q(\beta_k)) = \frac{1}{\epsilon(q(\beta_k))q(\beta_k)}\left( 1+\cos(\phi_{\mathbf{k}_1,\mathbf{k}'_1}) \right)\left( 1+\cos(\phi_{\mathbf{k}_2,\mathbf{k}'_2}) \right),\\
& &\widetilde{V}(q'(\beta_k)) = \frac{1}{\epsilon(q'(\beta_k))q'(\beta_k)}\left( 1+\cos(\phi_{\mathbf{k}_1,\mathbf{k}'_2}) \right)\left( 1+\cos(\phi_{\mathbf{k}_2,\mathbf{k}'_1}) \right),\\
& &\epsilon(q(\beta_k))q(\beta_k)  = 
 q(\beta_k)+C_\epsilon\widetilde{\Pi}(q(\beta_k)),
\end{eqnarray*}
with
\begin{eqnarray}
& &C_\epsilon = \frac{r_s k_F}{2}(g_s g_v)^{3/2},
\end{eqnarray}
\begin{equation*}
\widetilde{\Pi}(q(\beta_k)) = \left\lbrace
\begin{alignedat}{2}
&1 && \qquad\mbox{if}\quad q(\beta_k)<2k_F\\
&1+\frac{\pi q(\beta_k)}{8 k_F}-\frac{\sqrt{q^2(\beta_k)-4k_F^2}}{2q(\beta_k)}-\frac{q(\beta_k)}{4k_F}\arcsin\left( \frac{2k_F}{q(\beta_k)} \right) && \qquad\mbox{otherwise}
\end{alignedat}
\right.
\end{equation*}
$k_F=\sqrt{4\pi n/g_v g_s}$, $n$ being the  electron density.

Once a numerical approximation of the transition rate has been obtained, the Monte Carlo simulation proceeds in a  standard way. If the electron-electron scattering is selected after a free flight, to the wave-vector 
 $\mathbf{k}_1$ of the considered electron, a companion electron is associated according to the discrete estimation $f_{ij}$ of $f$ at the current time. 
 This is equivalent selecting with a uniform distribution one of the simulated particles, of course apart from the first one. 
Let $\mathbf{k}_2$ be the wave-vector of the companion electron. With the procedure delineated above one determines the ellipses and randomly chooses the angle
$\beta \in [0,2\pi]$ selecting the point $P$ and, therefore, the wave-vector of the outgoing electrons $\mathbf{k}'_1$ and $\mathbf{k}'_2$. Eventually, in order to take into account the Pauli exclusion principle, one generates two random numbers $\eta_1, \eta_2  \in [0,1]$. If 
 $f(\mathbf{k}'_1)<\eta_1$ and $f(\mathbf{k}'_2)<\eta_2$ the transition is accepted  and $\mathbf{k}_1$ and  $\mathbf{k}_2$ become $\mathbf{k}'_1$ and $\mathbf{k}'_2$, otherwise the initial states remain unchanged.

\section{Bipolar model}

As shown in \cite{ART:Brida}, the only allowed interband electron-electron scattering is when the two electrons are initially in different bands  and the final states remain in different bands. In particular, carrier multiplication, that is a transition which creates an excess of charge in the conduction band, and Auger recombination, which decreases the charge in the conduction band, has zero measure in the phase space.

To fix the ideas, without loss of generality let us suppose that $\mathbf{k}_1$ is in the conduction band (CB) while  $\mathbf{k}_2$ is in the valence band (VB) and  $\mathbf{k}'_1$ is in CB and  $\mathbf{k}'_2$ is in VB. 

Momentum and energy conservations now lead to the relations
\begin{equation*}
\mathbf{k}_1+\mathbf{k}_2 = \mathbf{k}'_1+\mathbf{k}'_2,
\end{equation*}
\begin{equation*}
\vert \mathbf{k}_1 \vert - \vert \mathbf{k}_2 \vert = \vert \mathbf{k}'_1 \vert - \vert \mathbf{k}'_2 \vert.
\end{equation*}
which determine a hyperbola $\tilde{\Gamma}$ with distance between foci $2 c$ equal to $\vert  \mathbf{k}_1+\mathbf{k}_2 \vert$.

\begin{figure}[ht]
\centering
\includegraphics[height=8.0cm]{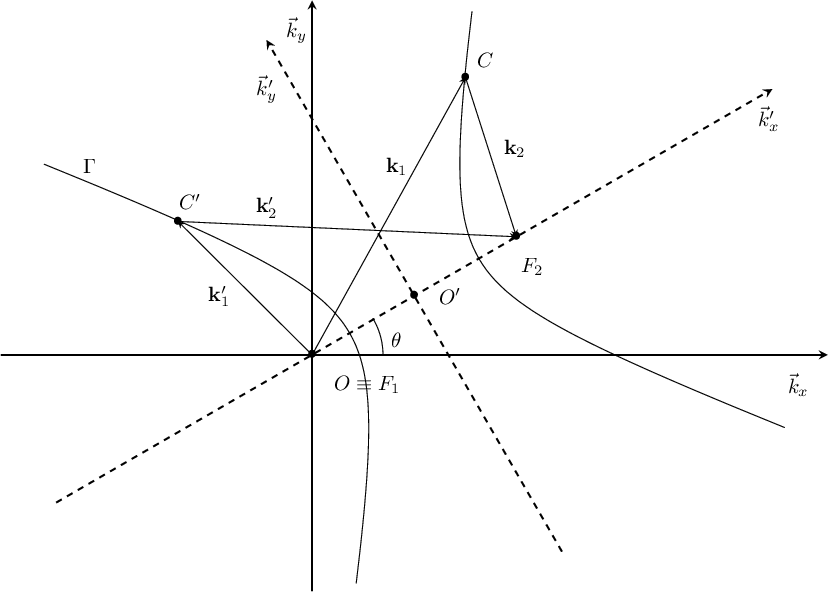} 
\caption{Schematic representation of the hyperbola arising in an electron-electron scattering rate in the interband case.}
\label{hyperbola}
\end{figure}

If we denote by  $a$ the length of the major semi-axis, one has
\begin{equation*}
\vert \vert \mathbf{k}_1 \vert - \vert \mathbf{k}_2 \vert \vert= 2a.
\end{equation*}

and from the relation  $c^2=a^2+b^2$  one gets the length $b$ of the minor semi-axis.

The electron-electron scattering rate in the interband case reads 
\begin{align*}
\lambda_{ee} 
& = \sum_{\mathbf{k}_2} f(\mathbf{k}_2) \sum_{\mathbf{k}'_1} \frac{2\pi}{\hbar} \vert M \vert^2(q,q') \delta(\varepsilon(\mathbf{k}'_1)+\varepsilon(\mathbf{k}_1+\mathbf{k}_2-\mathbf{k}'_1)-\varepsilon(\mathbf{k}_1)-\varepsilon(\mathbf{k}_2))
\end{align*}
and by homogenization \cite{Jacoboni}, one has
\begin{align*}
\lambda_{ee} & \approx 
\frac{1}{(2\pi)^3}\int_{\mathbb{R}^2}\left[  f(\mathbf{k}_2) \int_{\mathbb{R}^2} \! \frac{1}{\hbar^2 v_F} \vert M \vert^2(q,q') \delta(\vert\mathbf{k}'_1\vert-\vert\mathbf{k}_1+\mathbf{k}_2-\mathbf{k}'_1\vert-\vert\mathbf{k}_1\vert+\vert\mathbf{k}_2\vert) \, \diff \mathbf{k}'_1 \right] \, \diff \mathbf{k}_2.
\end{align*}
If we discretize the wave-vector space as in the intraband case and denote by
$f_{ij}$ the average of  $f(\mathbf{k}_2)$ in the cell determined by the indices $(i,j)$, one gets
\begin{align*}
\lambda_{ee} & \approx \frac{1}{(2\pi)^3\hbar^2 v_F} \sum_{i,j=1}^{N} f_{ij} \frac{\Delta_{\hbar\mathbf{k}}^2}{\hbar^2} \left[   \int_{\mathbb{R}^2} \! \vert M \vert^2(q,q') \delta(\vert\mathbf{k}'_1\vert-\vert\mathbf{k}_1+\mathbf{k}_2-\mathbf{k}'_1\vert-\vert\mathbf{k}_1\vert+\vert\mathbf{k}_2\vert) \, \diff \mathbf{k}'_1 \right] =\\
& = \frac{\Delta_{\hbar\mathbf{k}}^2}{(2\pi)^3\hbar^4 v_F} \sum_{i,j=1}^{N} f_{ij} \int_{\hat{\Gamma}} \! \vert M \vert^2(q,q') \, \diff \hat{\Gamma}.
\end{align*}
To complete the evaluation we need a parametric representation of $\hat{\Gamma}$.

%
Without loss of generality, we can assume that the foci of the hyperbola  have coordinates
\begin{equation*}
F_1=(0,0), \qquad F_2=\left( (\mathbf{k}_1+\mathbf{k}_2)_x, (\mathbf{k}_1+\mathbf{k}_2)_y \right).
\end{equation*}
The center is given by
\begin{equation*}
O'=\left( \frac{(\mathbf{k}_1+\mathbf{k}_2)_x}{2}, \frac{(\mathbf{k}_1+\mathbf{k}_2)_y}{2} \right).
\end{equation*}
A new reference frame, said \emph{hyperbolic}, appears (see Fig. \ref{hyperbola}). It is centered in $O'$ and has axes such that  $\vec{k}'_x \parallel \mathbf{k}_1+\mathbf{k}_2$ e $\vec{k}'_y \perp \vec{k}'_x$. We denote by $\theta$ the angle between the axes $\vec{k}'_x$ and $\vec{k}_x$. 

In the hyperbolic reference frame,  a parametric representation of $\hat{\Gamma}$ reads
\begin{equation*}
\Gamma_1:\left\lbrace
\begin{aligned}
\vec{k}'_x & = a\cosh\beta\\
\vec{k}'_y & = b\sinh\beta
\end{aligned}
\right. , \quad \beta \in \mathbb{R}; \qquad \Gamma_2: \left\lbrace
\begin{aligned}
\vec{k}'_x & = -a\cosh\beta\\
\vec{k}'_y & = b\sinh\beta
\end{aligned}
\right. , \quad \beta \in \mathbb{R},
\end{equation*}
where $\Gamma_1$ and $\Gamma_2$ represent the two branches of the hyperbola. Therefore
\begin{equation}
\lambda_{ee} = \frac{\Delta_{\hbar\mathbf{k}}^2}{(2\pi)^3\hbar^4 v_F} \sum_{i,j=1}^{N} f_{ij} \left[ \int_{\Gamma_1} \! \vert M \vert^2(q,q') \, \diff \Gamma_1 +\int_{\Gamma_2} \! \vert M \vert^2(q,q') \, \diff \Gamma_2\right] . \label{rate_ee_inter}
\end{equation}
with
\begin{align*}
\diff\Gamma_i   = \sqrt{a^2\sinh^2\beta+b^2\cosh^2\beta}  \, d \beta, \quad i = 1,2.
\end{align*}
At variance with the intraband case, now the evaluation of $\lambda_{ee}$ requires an integration on a curve which is not limited. However, as observed  in \cite{NaCaRo} we expect that the electron distribution can be considered negligible outside a compact set $\Omega$ in the wave-vector space and in the simulation the integral in (\ref{rate_ee_inter}) will be performed over $\hat{\Gamma} \cap \Omega$.

\section{Simulations in suspended monolayer graphene}
As first step,  we are interested in studying how the transport properties are influenced by the electron-electron scattering in a homogeneous  suspended monolayer of graphene.   
To this aim, we look for spatially homogeneous solutions  under a constant electric field for different values of the Fermi energy. 

One can control the shift of the Fermi energy $\eps_F$  from the Dirac point  by a suitable gate voltage (see Fig. \ref{coni}), creating a kind of $n$-doping (if $\eps_F > 0$)  or $p$-doping  (if $\eps_F < 0$).  If the absolute value of the Fermi energy is more than about 0.15 eV, the interband scattering becomes negligible on account of the Pauli exclusion principle, and a unipolar simulation is enough. For lower values a full interband simulation must be performed. 

 In the following  we consider  Fermi energies such that a unipolar situation is physically justified.   
The Boltzmann equation in a single valley reads
\begin{eqnarray}
&&
\dfrac{\partial f(\te,\bk)}{\partial t} - 
\dfrac{e}{\hbar} \, \mathbf{E} \cdot \nabla_{\bk} f(\te,\bk) = C[f],
\end{eqnarray}
which remains an integral-differential equation, 2D in the wave-vector space.

 As initial condition we take the Fermi-Dirac distribution
$$
f(0, \bk) = \dfrac{1}{1 + \exp \left( \dfrac{\eps(\bk) - \eps_F}{k_{B} \, T} \right)} ,
$$
with $T$ room temperature  (300 \,K).  Fermi energy is related to the charge density, which remains constant in the unipolar case, by 
\begin{equation}
\rho(t) = \rho(0) = \dfrac{g_s g_v}{(2 \, \pi)^{2}} \int f(0, \bk) \: d \bk,
\label{density}
\end{equation}
where $g_s =2$ and $g_v=2$ are the spin and valley degeneracies, respectively. 

\begin{figure}[ht]
\begin{center}
\includegraphics[width=0.95\textwidth]{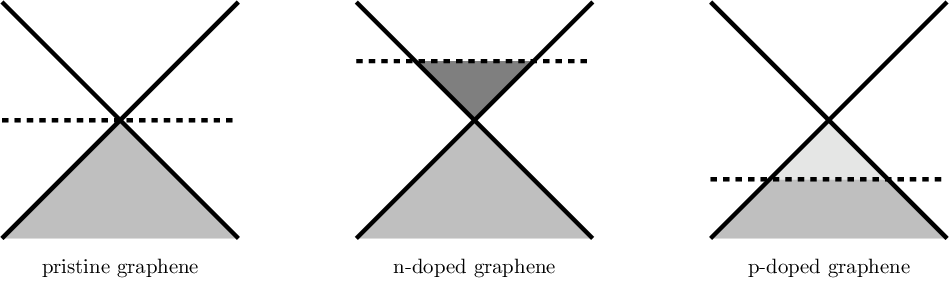} 
\caption{Shift of the Fermi energy (dashed line) with respect to the Dirac point (vertex  of the cone).}
\label{coni}
\end{center}
\end{figure}
All the electron-phonon scatterings have been included as in \cite{RoMajCo,CoMaRo,CoMajRo,CoBoDeRo,Fischetti}.
The numerical simulations with different values of the Fermi energy (and as consequence the electron density)  in a range where the unipolar description is justified.
We used the Direct Simulation Monte Carlo  devised in \cite{RoMajCo} by adding  the electron-electron scattering.
At variance with the electron-phonon scattering, for the electron-electron scattering we need to update the scattering rate after each collision event due to the presence of the electron distribution. This, of course, increases hugely the computing time. $N_p = 2\cdot 10^4$ particles have been used. As an alternative, hydrodynamical models can be formulated (the interested reader is referred to \cite{Bar,MoBa,LuMaNaRo,LuRo_IJNM,LuRo_AoP}).

In Fig.s \ref{distributions} the steady distribution functions at Fermi levels 0.15 eV (top) and 0.25 eV (bottom) are shown in the case of applied electric fields 1kV/cm (left), 3kV/cm (right). Note that the distributions never exceed one and, therefore, no violation of the Pauli exclusion principle is observed. 

\begin{figure}[ht]
\centering
\includegraphics[scale=0.5]{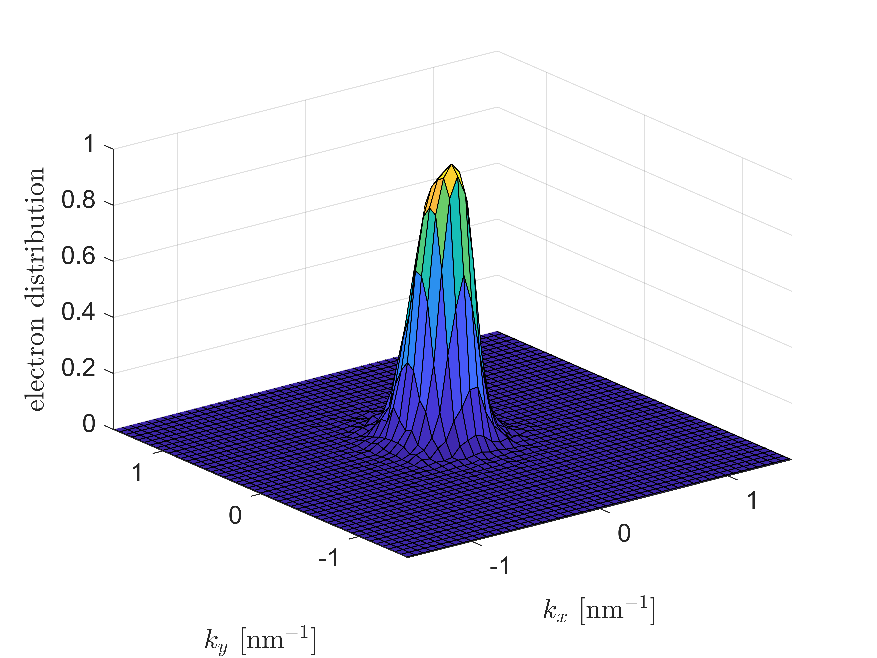}\includegraphics[scale=0.5]{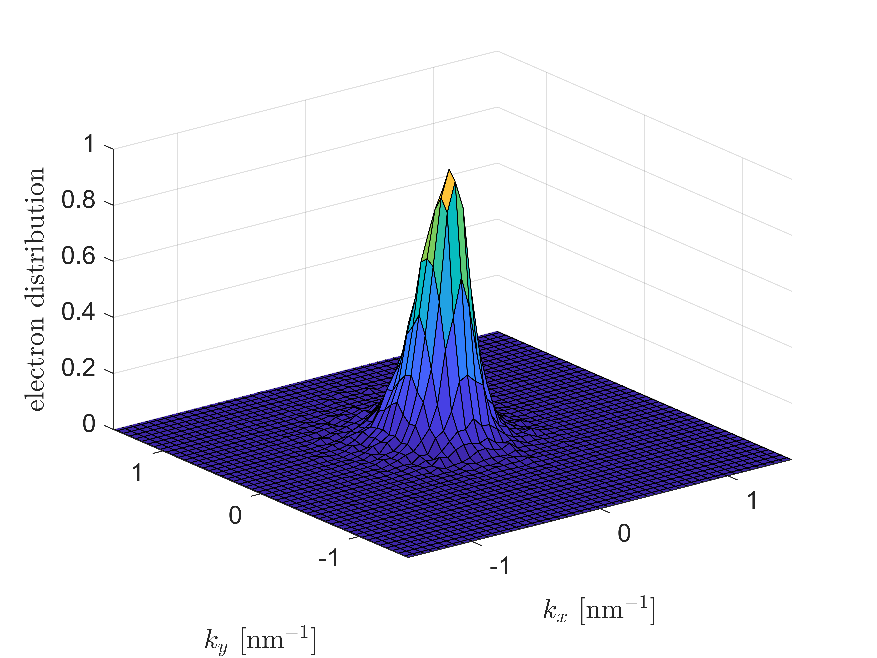}\\
\includegraphics[scale=0.5]{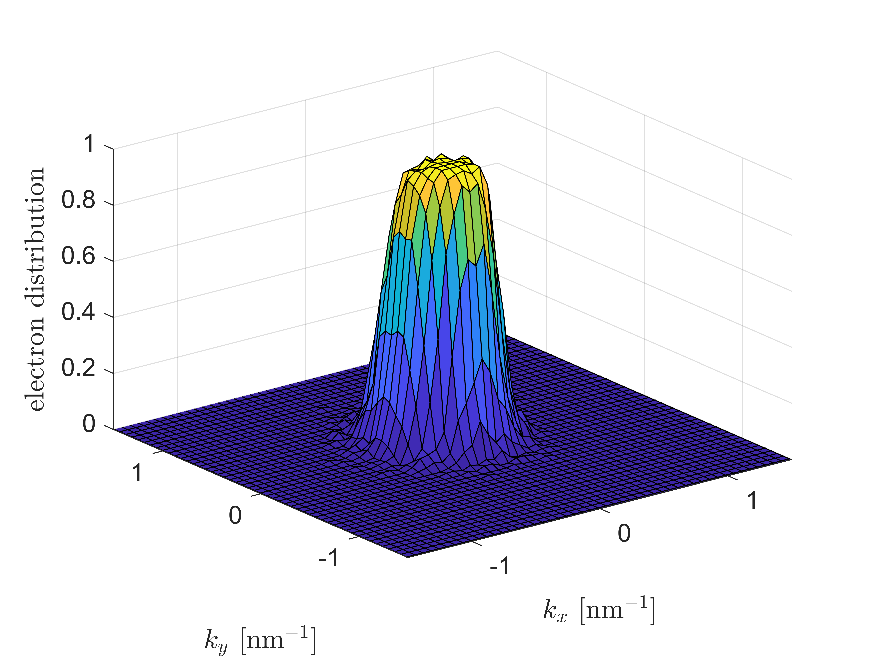}\includegraphics[scale=0.5]{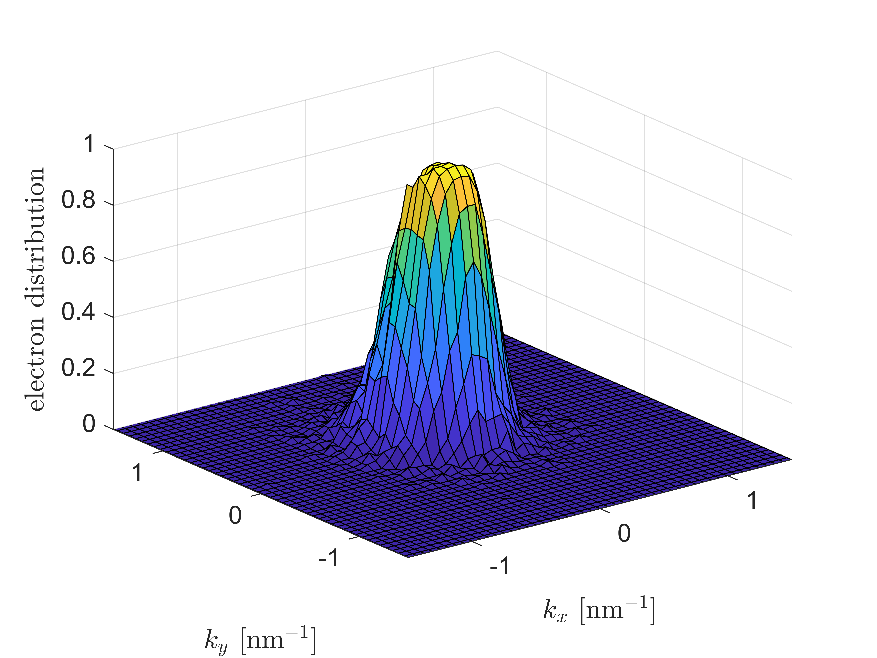}
\caption{Steady distribution functions at Fermi level 0.15 eV  (top) and 0.25 eV (bottom) with applied electric fields 1kV/cm (left), 3kV/cm (right).}
\label{distributions}
\end{figure}

In Fig.s \ref{vel_uni}, \ref{en_uni} the mean velocity $V(t)$ and mean energy $W(t)$, defined as
\begin{equation}
V(t) = \frac{1}{\rho(t) }\dfrac{g_s g_v}{(2 \, \pi)^{2}} \int {\bf v} f(t, \bk) \: d \bk, \quad
W(t) = \frac{1}{\rho(t) }\dfrac{g_s g_v}{(2 \, \pi)^{2}} \int {\bf v} f(t, \bk) \: d \bk,
\label{vel_en}
\end{equation}
are shown versus time with and without the electron-electron scattering. 
\begin{figure}[ht]
\centering
\includegraphics[scale=0.4]{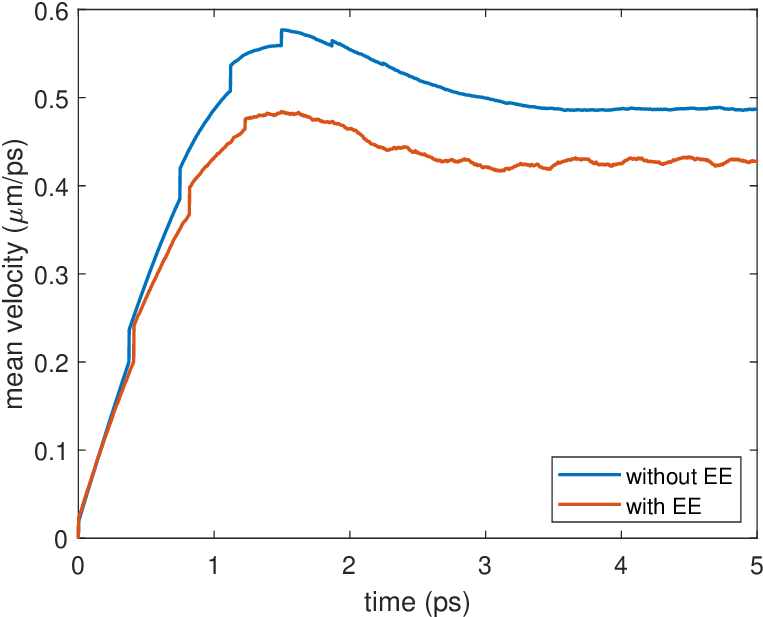}\includegraphics[scale=0.4]{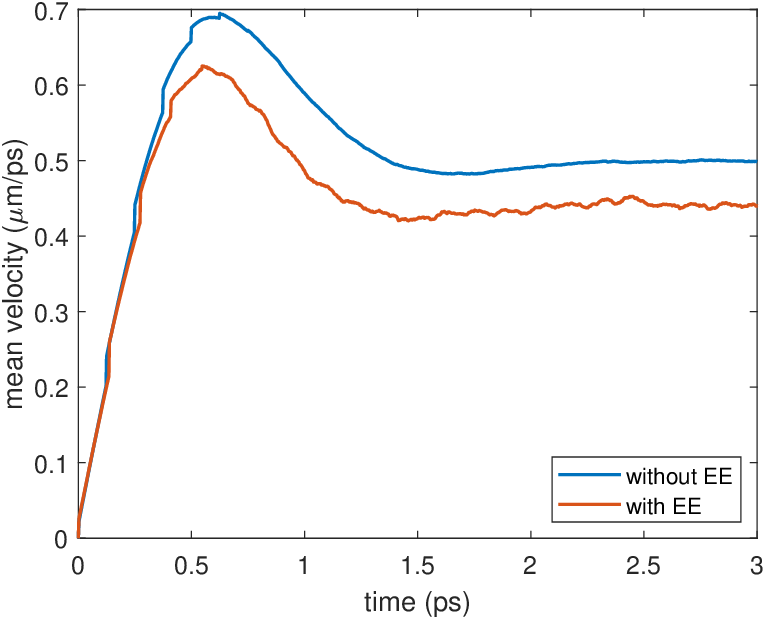}\includegraphics[scale=0.4]{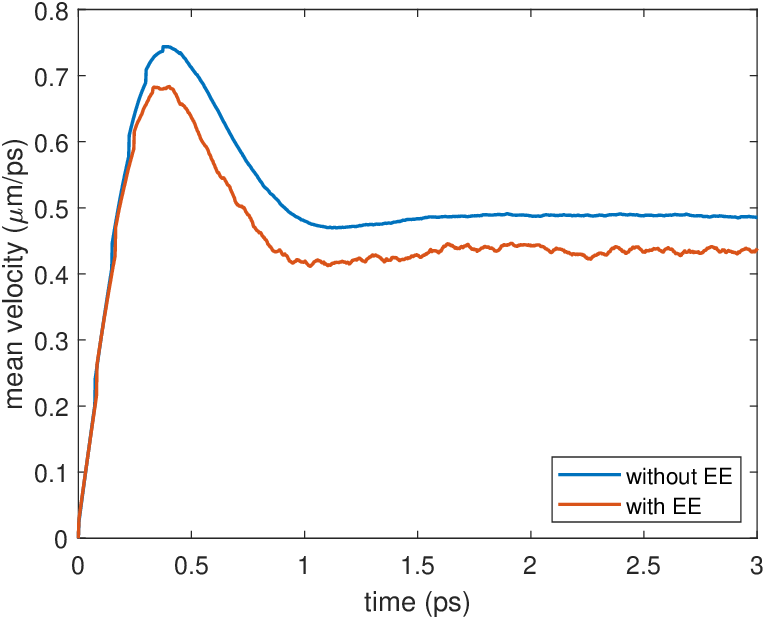}\\
\includegraphics[scale=0.4]{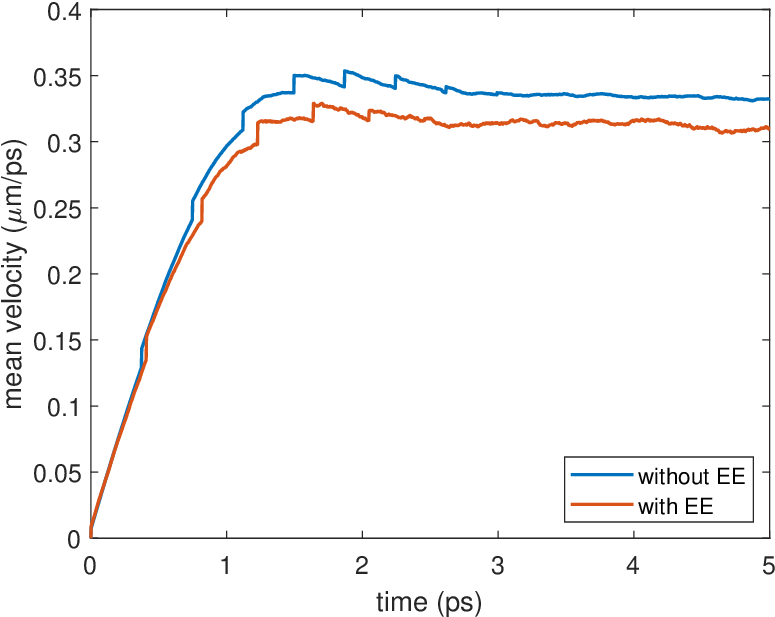}\includegraphics[scale=0.4]{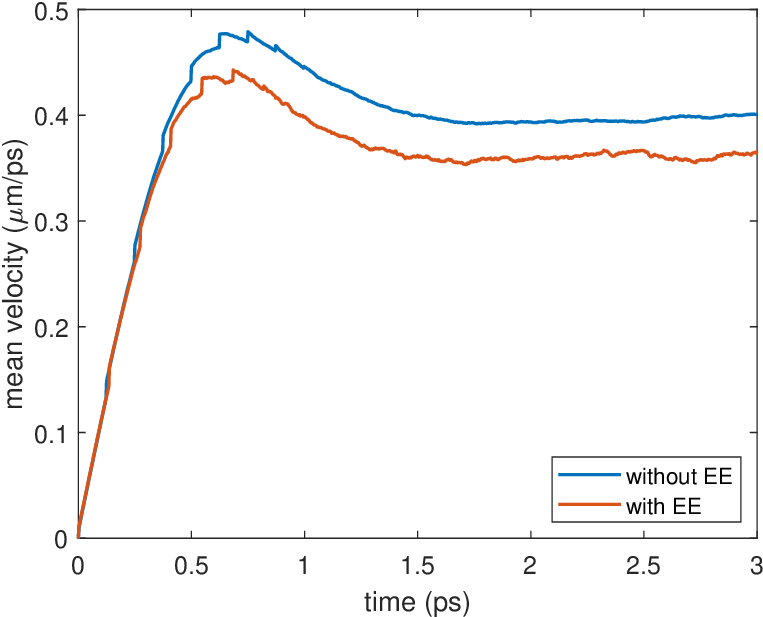}\includegraphics[scale=0.4]{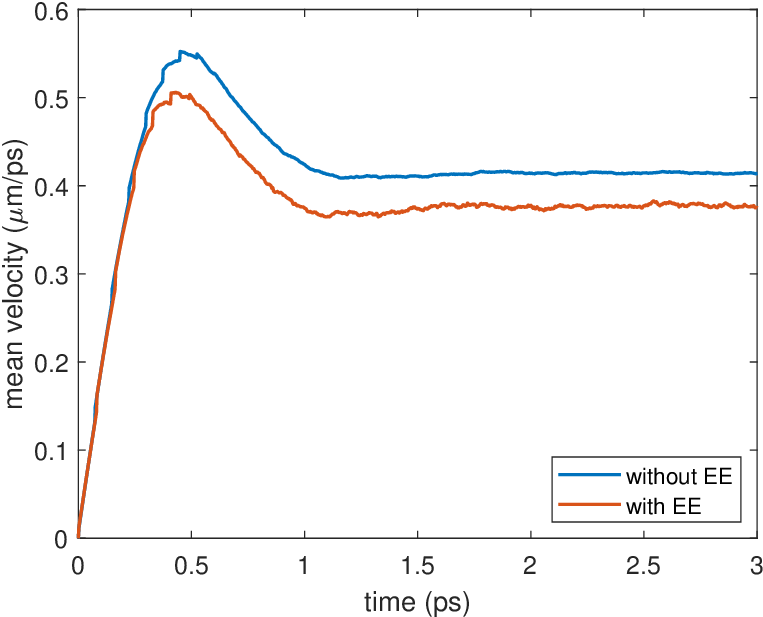}
\caption{Mean velocity versus time at Fermi level 0.15 eV  (top) and 0.25 eV (bottom) with applied electric fields 1kV/cm (left), 3kV/cm (center), 5kV/cm (right).}
\label{vel_uni}
\end{figure}
\begin{figure}[ht]
\centering
\includegraphics[scale=0.4]{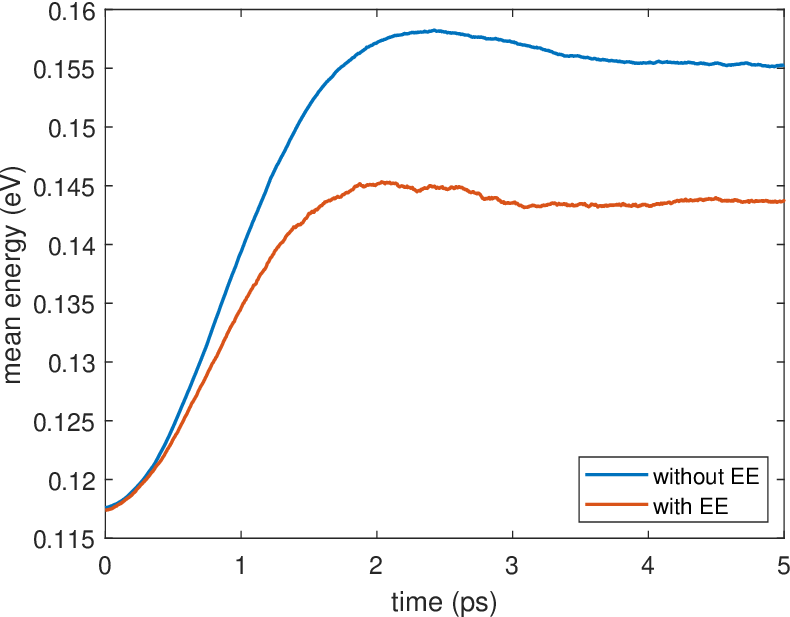}\includegraphics[scale=0.4]{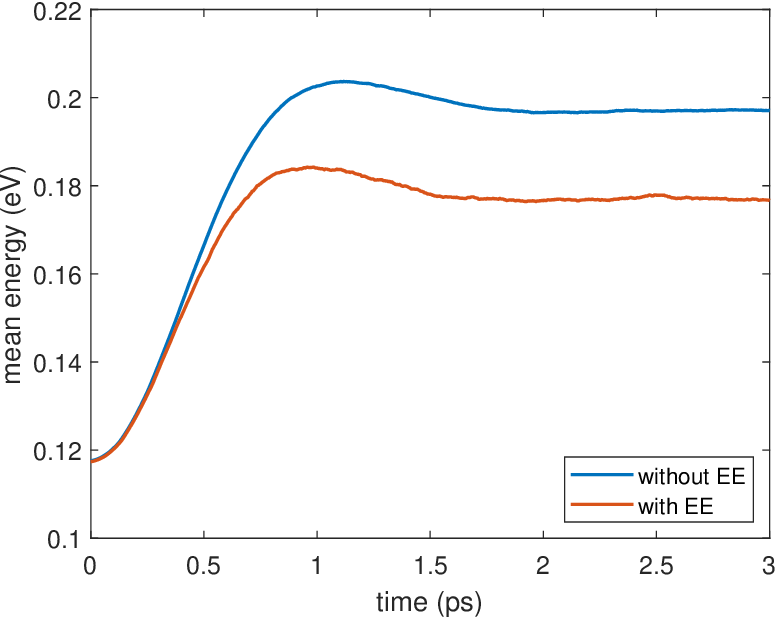}\includegraphics[scale=0.4]{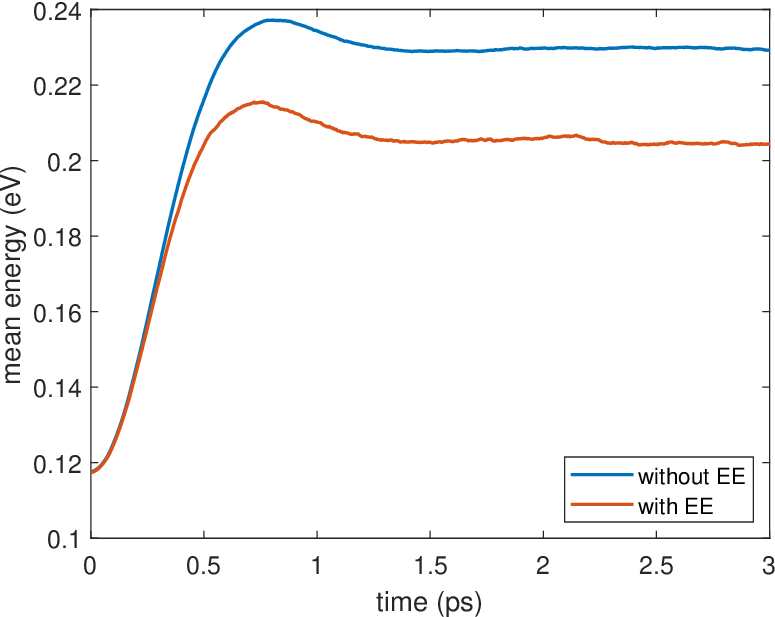}\\
\includegraphics[scale=0.4]{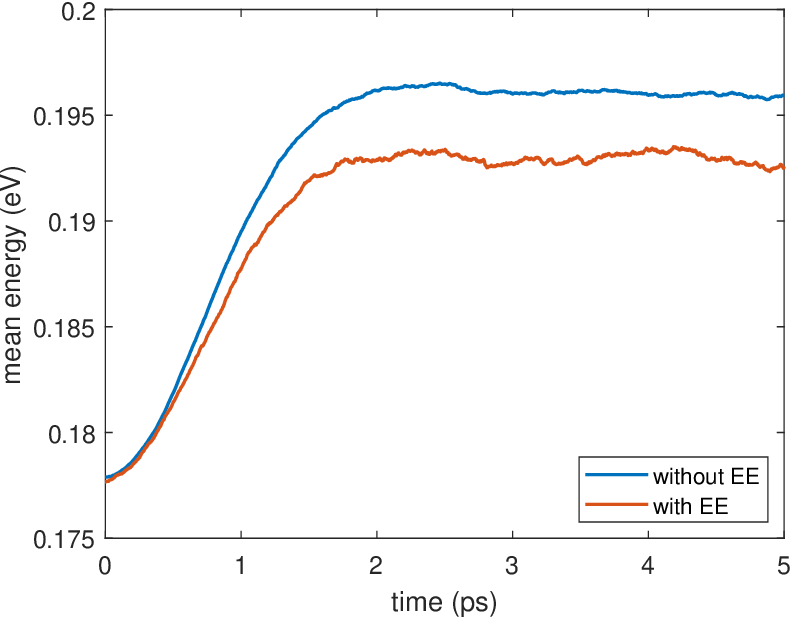}\includegraphics[scale=0.4]{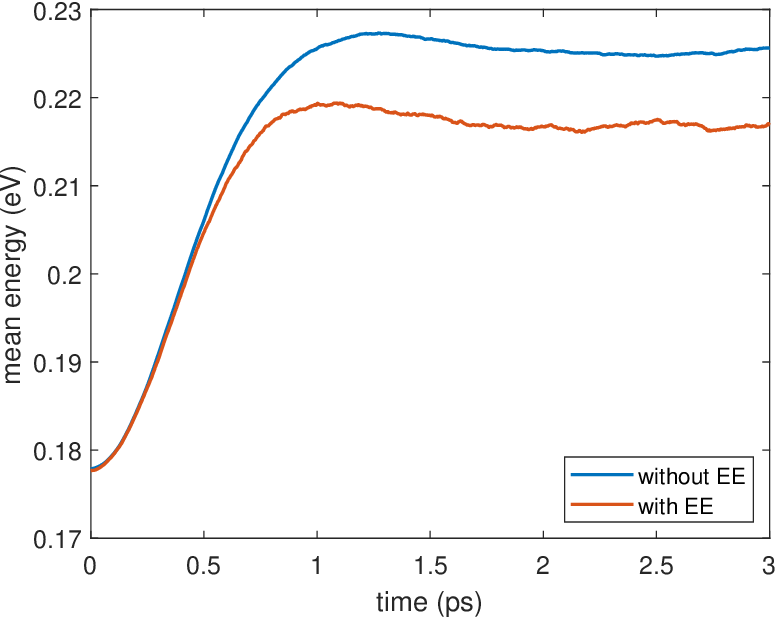}\includegraphics[scale=0.4]{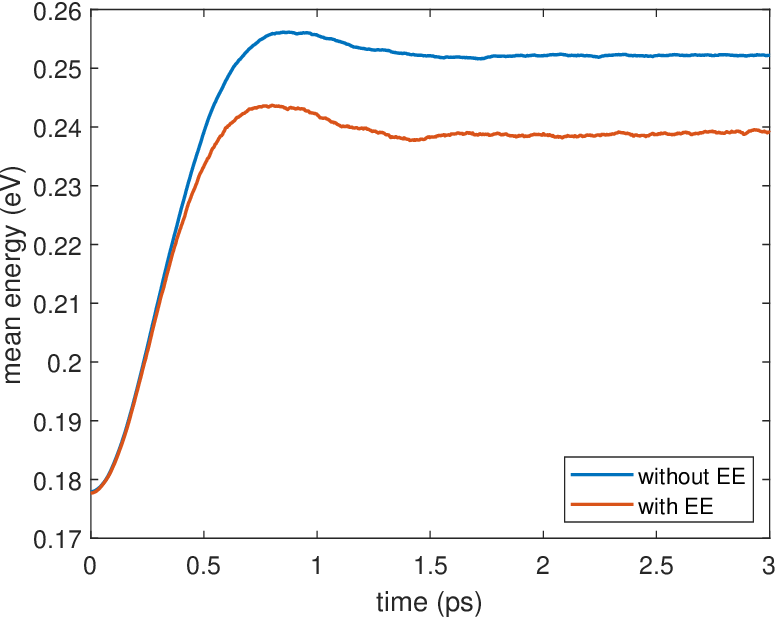}
\caption{Mean energy versus time at Fermi level 0.15 eV (top) and 0.25 eV (bottom) with applied electric fields 1kV/cm (left), 3kV/cm (center), 5kV/cm (right).}
\label{en_uni}
\end{figure}
As appears from the plots, the electron-electron scattering makes the mean velocity lower. 
The difference in the velocity is about 13$\%$ both when $\eps_F$ = 0.15 eV and $\eps_F$ = 0.25 eV. Apparently, a degradation of the velocity could seem in contrast with the property of momentum conservation. However, the group velocity is not proportional to the crystal momentum, as happens in the parabolic band structure, and momentum conservation does not imply velocity conservation. Of course a reduction of the mean velocity implies a non negligible degradation of then current. Moreover, the differences in the velocities influence also the balance equation for the energy. Therefore, despite the property of the electron-electron collision operator of conserving the energy, indeed an effective degradation of the average energy is obtained. Our results are in qualitative agreement with those reported in \cite{ART:Fang} where a standard Monte Carlo approach has been adopted. 

\section*{Conclusion and acknowledgments}
The main analytical properties of the electron-electron scattering have been investigated, in particular the collisional invariants and the equilibria which are more general than the Fermi-Dirac distribution. The scattering rate has been evaluated and used in DSMCs. The obtained results  highlight the importance of the electron-electron scattering for the correct determination of the currents and, therefore, of the characteristic curves. 

As future improvements, the dynamics of phonons could be added, e.g. by using the approaches in \cite{CoRo,CoMaRo,MaRo,Mascali} and some quantum corrections added and simulated for example as in \cite{MuWa}.

The authors thank Prof. A. Majorana for the stimulating discussions.

The authors acknowledge the support from INdAM (GNFM) and from Universit\`{a} degli Studi di Catania, {\it Piano della Ricerca 2020/2022 Linea di intervento 2} ``QICT''. G. Nastasi acknowledges the financial support from the project PON R\&I 2014–2020 ``Asse
IV - Istruzione e ricerca per il recupero - REACT-EU, Azione IV.4 - Dottorati e contratti di ricerca su tematiche dell'innovazione'', project: ``Modellizzazione, simulazione e design di transistori innovativi''.

%


%
%
\clearpage \noindent
\end{document}